\newenvironment{list4}{
	\begin{list}{$\bullet$}{%
			\setlength{\itemsep}{0.05cm}
			\setlength{\labelsep}{0.2cm}
			\setlength{\labelwidth}{0.3cm}
			\setlength{\parsep}{0in} 
			\setlength{\parskip}{0in}
			\setlength{\topsep}{0in} 
			\setlength{\partopsep}{0in}
			\setlength{\leftmargin}{0.16in}}}
	{\end{list}}
\let\mathbb=\mathds 
\def\diag{\mathop{\mathrm{diag}}}  
\newtheorem{theorem}{Theorem}
\newtheorem{defn}{Definition}
\newtheorem{prop}{Proposition}
\newtheorem{assum}{Assumption}
\newtheorem{remark}{Remark}
\newtheorem{lemma}{\bfseries Lemma}
\newtheorem{corollary}{\bfseries Corollary}
\newcommand{\mbb}{\mathbb}
\newcommand{\bzero}{{\mathbf{0}}}
\newcommand{\bone}{{\mathbf{1}}}
\newcommand{\scrN}{\mathcal{N}}
\begin{document}

\title{\LARGE \bf Network-Based Epidemic Control Through Optimal Travel and Quarantine Management~\thanks{Research partially supported by the NSF under grants DEB-2433726, ECCS-2317079, CCF-2200052, and IIS-1914792, by the ONR under grant N00014-19-1-2571, by the DOE under grant DE-AC02-05CH11231, by the NIH under grant UL54 TR004130, and by Boston University.}}



\author{Mahtab~Talaei, Apostolos~I.~Rikos, Alex~Olshevsky, Laura~F.~White, and Ioannis~Ch.~Paschalidis
\thanks{M. Talaei, A.I. Rikos, A. Olshevsky, and I.Ch. Paschalidis are with the Department of Electrical and Computer Engineering, and the Division of Systems Engineering, Boston University, Boston, MA, USA.  E-mails: {\tt \{mtalaei, arikos, alexols, yannisp\}@bu.edu}. I.Ch. Paschalidis is also affiliated with the Department of Biomedical Engineering and the Faculty for Computing \& Data Sciences, Boston University, Boston, MA, USA.}
\thanks{L.F. White is with the School of Public Health, Boston University, Boston, MA, USA, e-mail: {\tt lfwhite@bu.edu}.}
\thanks{The results in this paper were not presented at any conference.}
}

\maketitle
\thispagestyle{empty}
\pagestyle{empty}

\begin{abstract} 
Motivated by the swift global transmission of infectious diseases, we present a comprehensive framework for network-based epidemic control. 
Our aim is to curb epidemics using two different approaches. 
In the first approach, we introduce an optimization strategy that optimally reduces travel rates. 
We analyze the convergence of this strategy and show that it hinges on the network structure to minimize infection spread. 
In the second approach, we expand the classic SIR model by incorporating and optimizing quarantined states to strategically contain the epidemic. 
We show that this problem reduces 
to the problem of matrix balancing. 
We establish a link between optimization constraints and the epidemic's reproduction number, highlighting the relationship between network structure and disease dynamics. We demonstrate that applying augmented primal-dual gradient dynamics to the optimal quarantine problem ensures exponential convergence to the KKT point.
We conclude by validating our approaches using simulation studies that leverage public data from counties in the state of Massachusetts.
\end{abstract}

\begin{IEEEkeywords} 
epidemics, networked control systems, stability of nonlinear systems, compartmental models, optimization. 
\end{IEEEkeywords}

%
%
%
%

\section{Introduction}\label{sec:intro}

As the recent COVID-19 pandemic has demonstrated, effectively controlling the spread of infectious diseases is of paramount importance \cite{center2021covid, linde2009does}. 
When epidemics are viewed through the lens of interconnected networks (e.g., as cities, counties, or other societal structures), a profound challenge is to preserve the continuity of societies and minimize socio-economic disruptions. 
To address this challenge, control systems methods can play an important role \cite{nowzari2016analysis, pare2020modeling}, particularly by providing tools for designing and implementing effective control policies that can minimize the impact of epidemics on public health, economies, and societies in general. 

Network-based epidemic spread models employ a compartmental approach. 
Individuals are assigned to different compartments (e.g., susceptible, infected, recovered) based on their disease status \cite{nowzari2016analysis, pare2020modeling}.  
To control disease spreading, most works in the literature employ two types of methods: {\em optimal control} and {\em spectral optimization}.  
In the optimal control literature, the main focus is to optimize various model parameters~\cite{lee2010optimal, hayhoe2021multitask, khanafer2014optimal, 2020:Fangzhou_Buss_1115_1126}. 
These parameters are optimized with the objective of minimizing a cost function that accounts for undesirable outcomes (e.g., number of infected). 
The spectral optimization literature typically considers resource allocation problems~\cite{hota2021closed, mai2018distributed, smith2022convex, preciado2014optimal}. 
The aim is to contain the spread of the epidemic by minimizing the maximum eigenvalue of the matrix that describes the model's behavior. 
However, given the complex nature of epidemic models, designing and implementing policies to help mitigate the effects of an epidemic is computationally expensive and/or NP hard (e.g., from the perspective of spectral optimization, the optimal node and link removal problems are NP-complete and NP-hard, respectively \cite{van2011decreasing}).
Recent works focused on designing computationally efficient epidemic control strategies~\cite{2021:Ma_Olshevsky_Arxiv, 2022:Ron_Bullo_2731_2736}. 
Motivated by these works, we aim to provide computationally efficient frameworks for epidemic control that optimize allocation of available resources and minimize the impact on society. 

\noindent
\textbf{Main Contributions.}
We propose two novel strategies for network-based epidemic control. 
The first leverages optimal control principles. 
Specifically, our model's input parameters are optimized to minimize a cost function integrated along the model trajectory. 
The second strategy draws inspiration from spectral optimization techniques.
It focuses on efficient resource allocation  to minimize the spectral abscissa of the model's matrix, thereby achieving specific performance guarantees. 
Our contributions are as follows: 
\begin{list4}
    \item \textit{Optimizing Travel Rates:} We present a novel approach to suppress the spread of an epidemic by optimizing the travel rates between network nodes (see Section~\ref{sec:travel_rates}). 
    We analyze the convergence of our approach, and show that not only it effectively curbs the transmission of infectious diseases, but also it preserves societal and economic activities. 
    Specifically, its operation relies on minimizing the dominant eigenvalue of the infection spread matrix. 
    This underscores the centrality of the network analysis in epidemic control. 
    \item \textit{Optimizing Quarantine Rates:} We extend the SIR model to the SIQR model by incorporating a quarantine strategy for both asymptomatic and symptomatic individuals (see Section~\ref{sec:quarantine_rates}). For the SIQR model, we demonstrate that the optimization of quarantine costs can be formulated as a matrix balancing problem, which is solvable in polynomial time. Furthermore, we elucidate the relationship between the constraints of this problem and the basic reproduction number of the epidemic. As an extension, we show that applying augmented primal-dual dynamics to this problem achieves semi-global exponential convergence to the KKT point of the  quarantine optimization problem.
    \item \textit{Model Validation:} We validate the effectiveness of our approaches using real-world network data (cf. Sec.~\ref{sec:simulation_res}). 
    Namely, we calculate optimal travel and quarantine rates designed for the counties of Massachusetts. 
    This showcases the practical utility of our network-driven strategies. 
\end{list4}



\noindent
\textbf{Paper Organization.} 
The rest of this article is organized as follows. 
In Sec.~\ref{sec:notation} we review necessary notation and background, and in Sec.~\ref{sec:probform} we outline our problem formulation. 
Sec.~\ref{sec:travel_rates} details our approach for epidemic control by optimizing travel rates. 
In Sec.~\ref{sec:quarantine_rates} we introduce our extension of the SIR model to consider quarantined individuals, and we demonstrate how epidemic control is achievable by optimizing quarantine rates. 
Finally, in Sec.~\ref{sec:simulation_res} we present numerical results obtained from simulations on networks using publicly available data from Massachusetts counties.

\section{Notation and Preliminaries}\label{sec:notation}

The sets of real, integer, complex, and natural numbers are denoted by $\mathbb{R}$, $\mathbb{Z}$, $\mathbb{C}$, and $\mathbb{N}$, respectively. $\mathbb{R}_{\geq 0}$ ($\mathbb{R}_{>0}$) refers to nonnegative (positive) real numbers, and $\mathbb{R}_{\leq 0}$ ($\mathbb{R}_{<0}$) refers to nonpositive (negative) real numbers. Vectors and matrices are denoted by lowercase and capital letters, respectively. For a matrix $A\in \mathbb{R}^{n\times n}$, the entry at row $i$ and column $j$ is denoted by $a_{ij}$. The column vector concatenating all columns of $A$ into a single column is $a = \text{vec}(A)$. Vectors are assumed to be column vectors unless stated otherwise. $\mathbf{1}$ denotes the all-ones column vector, $I$ the identity matrix, and $\mathbf{0}$ the all-zero vector or matrix. Inequalities between vectors are elementwise. A matrix is non-negative (non-positive, negative, positive) if all its elements are non-negative (non-positive, negative, positive). A matrix is essentially non-negative if all off-diagonal elements are non-negative. The dominant eigenvalue and eigenvector of a matrix $A$ are $\lambda_{\max}(A)$ and $u_{\max}(A)$. For a vector $a$, $\|a\|$ is its Euclidean norm, and $\diag(a)$ is a diagonal matrix with $a$ as the diagonal elements. The element-wise multiplication of two vectors $a$ and $b$ is $a \circ b$. The transpose, spectral radius, inverse, and Moore-Penrose inverse of a matrix $A$ are denoted by $A^T$, $\rho(A)$, $A^{-1}$, and $A^{+}$. The complex conjugate of $A$ is $\Bar{A}$, and its conjugate transpose is $A^*$. The spectral norm of a square matrix $A$ is $\Vert A \Vert_2 = \lambda_{\max}(A^*A)^{\frac{1}{2}}$. The derivative of a variable $s$ with respect to time $t$ is $\dot{s}$. The gradient of a function $f$ is $\nabla f$. A function is $L$-smooth if it is continuously differentiable with a Lipschitz continuous gradient having Lipschitz constant $L$. A neighborhood $\scrN(Z_0)$ of a matrix $Z_0 \in \mathbb{C}^{n \times n}$ is a set of matrices close to $Z_0$ in some metric. For $x \in \mathbb{R}$, $[x]_+ := \max\{x,0\}$.

The network is represented by a {\em directed graph (digraph)}, denoted as $\mathcal{G} = (\mathcal{V}, \mathcal{E})$. In $\mathcal{G}$, the set of nodes is expressed as $\mathcal{V} = \{ v_1, \ldots, v_n \}$, and the set of edges is defined as $\mathcal{E} \subseteq \{ \mathcal{V} \times \mathcal{V} \cup {(v_i, v_i) \mid v_i \in \mathcal{V}} \}$, where each node has a virtual self-edge. The number of nodes and edges are denoted as $|\mathcal{V}| = n$ and $|\mathcal{E}| = m$, respectively. A {\em directed path} of length $t$ exists from node $v_i$ to node $v_l$ if there is a sequence of nodes $v_i \equiv l_0, l_1, \dots, l_t \equiv v_l$ such that $(l_{\tau}, l_{\tau+1}) \in \mathcal{E}$ for $\tau = 0, 1, \dots, t-1$. A digraph is called ``strongly connected'' if there is a directed path from every node $v_i$ to every other node $v_l$ for all pairs of nodes $v_i, v_l \in \mathcal{V}$. A square non-negative matrix $A\in \mathbb{R}_{\geq 0}^{n\times n}$ corresponds to a digraph $G(A)$ with $n$ nodes $\{ v_1, \ldots, v_n \}$ and a set edges $\mathcal{E}=\{ (v_i, v_j) \mid a_{ij}>0\}$.

\begin{defn}[Strongly Connected Matrix]
    A square non-negative matrix $A$ is said to be strongly connected if its corresponding digraph $G(A)$ is strongly connected.
\end{defn}

\begin{defn}[Primitive Matrix]
    A primitive matrix is a square non-negative matrix some power of which is positive.
\end{defn}

\begin{defn}[Matrix Stability and Function Convergence]
A matrix $A$ is {\em continuous time stable} if all of its eigenvalues possess real parts that are less than or equal to zero. 
A matrix $A$ is {\em discrete time stable} if all of its eigenvalues have magnitudes that are less than or equal to one. 
A function $y(t)$ exhibits decay at a rate of $\alpha$ starting at $t_0$ if the following condition holds: $y(t) \leq y(t_0) e^{- \alpha t}$ for all $t \geq t_0$.
\end{defn}

\subsection{Network spread model of COVID-19}\label{covid_network}
To model the spread of COVID-19, we use a nuanced extension of the basic Susceptible-Infected-Recovered (SIR) compartmental model. We separate the infected individuals into symptomatic and asymptomatic compartments (see \cite{2020:Giordano_Colaneri_855_860, birge2022controlling, 2021:Ma_Olshevsky_Arxiv} and references therein). Dividing the infected population into these subgroups allows us to better capture their epidemiological dynamics, such as differing infectiousness and detection rates between asymptomatic and symptomatic cases. The proposed model can be described as:
\begin{subequations}\label{original_COVID19_model}
\begin{align}
\dot{s}_i &= -s_i \sum_{j=1}^n a_{ij} (\beta^{\mathrm{a}} x_j^{\mathrm{a}} + \beta^{\mathrm{s}} x_j^{\mathrm{s}}), \label{susceptible_equation} \\ 
\dot{x}_i^{\mathrm{a}} &= s_i \sum_{j=1}^n a_{ij} (\beta^{\mathrm{a}} x_j^{\mathrm{a}} + \beta^{\mathrm{s}} x_j^{\mathrm{s}}) - (\epsilon +r^{\mathrm{a}}) x_i^{\mathrm{a}}, \label{x_asymp_equation} \\
\dot{x}_i^{\mathrm{s}} &= \epsilon x_i^{\mathrm{a}} - r^{\mathrm{s}} x_i^{\mathrm{s}}, \label{x_symp_equation} \\
\dot{h}_i&= r ^{\mathrm{a}} x_i ^ {\mathrm{a}} + r ^{\mathrm{s}} x_i ^ {\mathrm{s}} ,
\end{align}
\end{subequations}
where 
$n$ is the number of locations in our network, 
$s=(s_1,\ldots,s_n)\in \mbb{R}^n$ is a vector with elements $s_i$ equal to the proportion of the population at location $i$ being susceptible, 
$x^{\mathrm{a}}$ and $x^{\mathrm{s}}$ are vectors with elements equal to the proportion of the population at each location being asymptomatic and symptomatic,  respectively, and $h$ is the vector with elements $h_i$ denoting the portion of recovered population at node $i$.
$a_{ij}$ is the rate at which infections at location $j$ affect those at location $i$ (to be computed later), 
$\beta^{\mathrm{a}}$ and $\beta^{\mathrm{s}}$ are the disease transmission rates of asymptomatic and symptomatic individuals, respectively, 
$r^{\mathrm{a}}$ and $r^{\mathrm{s}}$ are the recovery rates of asymptomatic and symptomatic, respectively, and
$\epsilon$ is  the rate at which asymptomatic individuals develop symptoms. 

Following \cite{2021:Ma_Olshevsky_Arxiv}, different parameters are utilized for symptomatic and asymptomatic individuals due to the findings of \cite{2021:Kissler_Grad_medRxiv}.
Specifically, in \cite{2021:Kissler_Grad_medRxiv} it is shown that asymptomatic individuals exhibit a more rapid decline in viral load (compared to symptomatic). 
Consequently, not only do they experience quicker recovery, but they are also likely to be less contagious.

\begin{remark}\label{COVID_19_explanation}
    Note that we consider the model in \eqref{original_COVID19_model} due to its versatile framework. 
    While \eqref{original_COVID19_model} can be reduced to the SIR or SEIR model (by setting $\beta^{\mathrm{s}} = r^{\mathrm{s}} = \epsilon = 0$, or $\beta^{\mathrm{a}} = r^{\mathrm{a}} = 0$, respectively), its main advantage lies in its extensibility. 
    As detailed in the following sections, this extensibility is one of the main research directions of this paper.
\end{remark}

\subsection{Construction of Infection Flow and Travel Rates Matrices}\label{Travel_Matrix}

\noindent
\textbf{Infection Flow Matrix.} 
To construct the infection flow matrix $A$, we adopt the approach of \cite{birge2022controlling}, which is 
 particularly suitable for modeling the impact of lockdown measures for controlling COVID-19. 
At each location $i$, we denote the fixed population size by $N_i$. 
Additionally, we assume that people travel from location $i$ to location $j$ at a rate denoted by $\tau_{ij}$ (determining $\tau_{ij}$ is addressed later in the paper). 
Thus, the travel rate matrix is an $n \times n$ matrix defined as $T = [\tau_{ij}]_{i,j=n}^{n}$.   
The focus on travel rates is essential because they play a pivotal role in determining the evolution and spread of an epidemic. 
For instance, \cite{2006:Viboud_Grenfell_447_451} revealed a strong correlation between the regional progression of influenza and people's movements.

To construct $A$, let us track how the susceptible population moves to other compartments over time.  
The flow of the susceptible population from location $i$ to location $l$ is $s_i \tau_{il}$. 
The rate of infection at location $l$ is proportional to the fraction of total (asymptomatic and symptomatic) infected people traveling to location $l$ ($\sum_{j=1}^n N_j \tau_{jl} x_j^a$ and $\sum_{j=1}^n N_j \tau_{jl} x_j^s$, respectively) over the total population traveling to location $l$ ($\sum_{k=1}^n N_k \tau_{kl})$. People at location $i$ may get infected as they travel to location $l$, which leads to the following rate of change to the susceptible individuals at location $i$: 
\begin{equation}\label{susceptible_equation_analytic}
\dot{s}_i = - \sum_{l=1}^n s_i  \tau_{il} 
\left( \frac{\sum_{j=1}^n N_j \tau_{jl} x_j^a}{\sum_{k=1}^n N_k\tau_{kl}} \beta^a 
+ \frac{\sum_{j=1}^n N_j \tau_{jl} x_j^s}{\sum_{k=1}^n N_k \tau_{kl}} \beta^s \right). 
\end{equation}
Thus, comparing \eqref{susceptible_equation_analytic} with \eqref{susceptible_equation}, we can define 
\begin{equation}\label{travel_matrix_definition}
a_{ij} = \sum_{l=1}^n \tau_{il} \tau_{jl} \frac{N_j}{\sum_{k=1}^n N_k \tau_{kl}}.  
\end{equation} 


\noindent
\textbf{Travel Rates.} 
To construct the travel rate matrix $T$, we rely on the Human Mobility Flow dataset \cite{kang2020multiscale}.
This dataset captures daily visitor flows between {\em Census Block Groups (CBGs)} during the COVID-19 pandemic. 
It provides insights into how people move between locations, measured as the daily number of CBG-to-CBG flows for trips lasting more than $1$ minute. 
It also generates the estimates for the whole population flow that reflects the general human mobility patterns at the CBG level. 
In the simulations, we focus on county-level trips within Massachusetts. 
We aggregate the population flow counts for all nodes, including flows within each county and to other counties. 
To generate $T$, we assume people in node $i$ spend an $t_i$ ratio of their day outside of their residences and therefore,
\begin{equation*}
    \tau_{ij} =  t_i P_f(i,j)/ \sum_k {P_f(i,k)},
\end{equation*}
where $P_f$ is a population flow matrix, with $P_f(i,j)$ denoting the number of daily trips (lasting in location $j$ more than $1$ minute) from location $i$ to location $j$. 

\subsection{Stability and Splitting of Matrices}\label{matrix_stabl_splir}

We now provide the following lemmas about the stability of matrices that is important for our subsequent development. 



\begin{lemma}[A Perron–Frobenius version from \cite{2021:Ma_Olshevsky_Arxiv}]\label{perron_frob_3}
    Suppose $A$ is a matrix that is strongly connected, with nonnegative off-diagonal elements. 
    In this case, there exists a real eigenvalue of $A$ that is greater in magnitude than the real part of any other eigenvalue of $A$. 
    This dominant eigenvalue $\lambda_{\max}(A)$ is unique (simple) and its corresponding eigenvector $u_{\max}(A)$ is real and positive. 
\end{lemma}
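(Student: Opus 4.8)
The plan is to reduce the statement to the classical Perron–Frobenius theorem for irreducible nonnegative matrices by a diagonal shift. Interpreting the claim concretely, I want to exhibit a real, simple eigenvalue $\lambda_{\max}(A)$ with real positive eigenvector $u_{\max}(A)$ such that $\lambda_{\max}(A) > \operatorname{Re}(\mu)$ for every other eigenvalue $\mu$ of $A$. First I would choose a scalar $c > 0$ large enough that $B := A + cI$ is nonnegative; since $A$ already has nonnegative off-diagonal entries, it suffices that $c \geq -\min_i a_{ii}$. The shift alters only the diagonal of $A$, so it leaves every off-diagonal entry—and hence the off-diagonal edge structure of the associated digraph—unchanged. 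Because strong connectivity of $G(A)$ depends only on edges between distinct nodes (adding self-loops changes nothing), $B$ inherits the strong connectivity of $A$, i.e.\ $B$ is an irreducible nonnegative matrix.

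Next I would invoke the standard Perron–Frobenius theorem for $B$: its spectral radius $\rho(B)$ is a positive, algebraically simple eigenvalue admitting a real, strictly positive eigenvector $u$, and $|\mu| \leq \rho(B)$ for every eigenvalue $\mu$ of $B$. The eigendata transfer cleanly under the shift, since $Bu = \rho(B)u$ is equivalent to $Au = (\rho(B)-c)u$. Hence $\lambda_{\max}(A) := \rho(B) - c$ is a real eigenvalue of $A$; it is algebraically simple because the characteristic polynomial of $A$ is that of $B$ translated by $c$, so multiplicities are preserved; and its eigenvector $u_{\max}(A) = u$ is real and positive, as required. More generally, the spectrum of $A$ is exactly $\{\mu - c : \mu \in \operatorname{spec}(B)\}$.

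It then remains to prove the dominance claim, which is the delicate step. For any eigenvalue $\nu = \mu - c$ of $A$ with $\nu \neq \lambda_{\max}(A)$ (equivalently $\mu \neq \rho(B)$), I would chain the two inequalities
$$\operatorname{Re}(\nu) = \operatorname{Re}(\mu) - c \leq |\mu| - c \leq \rho(B) - c = \lambda_{\max}(A).$$
The main obstacle is upgrading this to a strict inequality, since both $\operatorname{Re}(\mu) \leq |\mu|$ and $|\mu| \leq \rho(B)$ can individually hold with equality—the latter precisely because $B$ is only assumed irreducible, not primitive, so other eigenvalues may share the modulus $\rho(B)$. I would argue that equality throughout would force $\operatorname{Re}(\mu) = |\mu|$, making $\mu$ real and nonnegative, and simultaneously $|\mu| = \rho(B)$; together these give $\mu = \rho(B)$. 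But $\rho(B)$ is algebraically simple, so the only eigenvalue of $B$ equal to $\rho(B)$ is $\rho(B)$ itself, contradicting $\mu \neq \rho(B)$. Therefore $\operatorname{Re}(\nu) < \lambda_{\max}(A)$ strictly, which establishes that the real eigenvalue $\lambda_{\max}(A)$ exceeds the real part of every other eigenvalue and completes the proof. Notably, this argument shows that the essentially nonnegative (Metzler) setting yields strict dominance in \emph{real part} without any primitivity hypothesis, in contrast to the magnitude-dominance statement for general nonnegative matrices.
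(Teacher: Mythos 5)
Your proof is correct. The paper gives no proof of this lemma---it is imported directly from \cite{2021:Ma_Olshevsky_Arxiv}---and your argument (shift $B = A + cI$ with $c \geq -\min_i a_{ii}$ to obtain an irreducible nonnegative matrix, invoke the classical Perron--Frobenius theorem, translate the spectrum back, and resolve the equality case $\operatorname{Re}(\mu) = |\mu| = \rho(B)$ to get strict real-part dominance without any primitivity assumption) is exactly the standard route for this Metzler-matrix version of Perron--Frobenius, so it matches the intended proof; your reading of the imprecise phrase ``greater in magnitude'' as $\lambda_{\max}(A) > \operatorname{Re}(\mu)$ for every other eigenvalue $\mu$ is also the intended interpretation.
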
 

\begin{lemma}[\hspace{-.12cm} \cite{2021:Ma_Olshevsky_Arxiv}]\label{splitting_lemma} 
The following statements characterize continuous and discrete-time matrix stability. 
\begin{itemize}
    \item (\textit{Continuous-time stability}).
    A strongly connected matrix $P$ with non-negative off-diagonal elements is continuous-time stable if and only if there exists a vector $d > \bzero$ such that $Pd \leq \bzero$. 
    \item (\textit{Discrete-time stability}). 
    A non-negative strongly connected matrix $B$ is discrete-time stable if and only if there exists a vector $d > \bzero$ such that $Bd \leq d$.
    \item (\textit{Connection between Continuous-time and Discrete-time stability}). 
    Suppose $P = L - D$ where $L$ is a non-negative matrix while $D$ is a matrix with non-positive off-diagonal elements whose inverse is elementwise non-negative. 
    Suppose further that both $P$ and $D^{-1}L$ are strongly connected. Then $P$ is continuous-time stable if and only if $B = D^{-1}L$ is discrete-time stable.
\end{itemize}
\end{lemma}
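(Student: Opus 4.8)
The plan is to reduce all three claims to the Perron--Frobenius statement of Lemma~\ref{perron_frob_3}, exploiting that both the right and the left dominant eigenvectors of the relevant matrices are real and strictly positive. The starting observation is that for a strongly connected matrix $P$ with non-negative off-diagonal entries, Lemma~\ref{perron_frob_3} supplies a real dominant eigenvalue $\lambda_{\max}(P)$ whose real part exceeds that of every other eigenvalue; hence $P$ is continuous-time stable if and only if $\lambda_{\max}(P)\le 0$. Likewise, for a non-negative strongly connected matrix $B$ the dominant eigenvalue coincides with the spectral radius, $\lambda_{\max}(B)=\rho(B)>0$, so $B$ is discrete-time stable if and only if $\rho(B)\le 1$. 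I will also use that applying Lemma~\ref{perron_frob_3} to $P^{T}$ (again strongly connected with non-negative off-diagonal entries) yields a strictly positive \emph{left} dominant eigenvector $w>\bzero$ with $w^{T}P=\lambda_{\max}(P)w^{T}$.

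For the continuous-time characterization I would argue both directions through these eigenvectors. For necessity, assume $\lambda_{\max}(P)\le 0$ and take $d=u_{\max}(P)>\bzero$; then $Pd=\lambda_{\max}(P)d\le\bzero$, giving the required certificate. For sufficiency, suppose $d>\bzero$ satisfies $Pd\le\bzero$ and pair it with $w$: since $w>\bzero$ we get $w^{T}Pd\le 0$, while $w^{T}Pd=\lambda_{\max}(P)\,w^{T}d$ with $w^{T}d>0$, forcing $\lambda_{\max}(P)\le 0$. The discrete-time characterization follows the same template, most economically via the shift $P:=B-I$: this $P$ has the same off-diagonal pattern as $B$, is strongly connected, satisfies $\lambda_{\max}(P)=\rho(B)-1$, and obeys $Pd\le\bzero\Leftrightarrow Bd\le d$, so the second bullet is exactly the first applied to $B-I$.

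The connection between the two notions is the substantive part, and the idea is to turn the splitting $P=L-D$ into an identity linking the two Perron vectors. Since $D^{-1}\ge\bzero$ and $L\ge\bzero$, the matrix $B=D^{-1}L$ is non-negative, and it is strongly connected by hypothesis, so by Lemma~\ref{perron_frob_3} it has a strictly positive right Perron vector $v>\bzero$ with $Bv=\rho(B)v$, equivalently $Lv=\rho(B)Dv$. Substituting into $P=L-D$ gives the central identity
\begin{equation*}
Pv=(L-D)v=\bigl(\rho(B)-1\bigr)Dv .
\end{equation*}
Moreover $P$ has non-negative off-diagonal entries (as $L\ge\bzero$ and $-D$ has non-negative off-diagonal entries) and is strongly connected, so it admits the positive left eigenvector $w>\bzero$. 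Pairing the identity with $w$ yields
\begin{equation*}
\lambda_{\max}(P)\,(w^{T}v)=\bigl(\rho(B)-1\bigr)\,(w^{T}Dv).
\end{equation*}
Here $w^{T}v>0$, and $w^{T}Dv=\rho(B)^{-1}w^{T}Lv>0$ because $w,v>\bzero$, $L\ge\bzero$ is nonzero (irreducibility of $B$ forbids $L=\bzero$), and $\rho(B)>0$. Hence $\lambda_{\max}(P)$ and $\rho(B)-1$ share the same sign, giving $\lambda_{\max}(P)\le 0\Leftrightarrow\rho(B)\le 1$, i.e.\ $P$ is continuous-time stable if and only if $B$ is discrete-time stable.

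The main obstacle I anticipate is exactly the positivity of $w^{T}Dv$ in the third part: $D$ need not be non-negative (it only has non-positive off-diagonal entries), so I cannot compare the inequalities $Pd\le\bzero$ and $Bd\le d$ with a common certificate $d$ by multiplying through by $D$, and the naive attempt to reuse the same $d$ breaks down. Routing everything instead through the eigenvalue identity above, and using $Lv=\rho(B)Dv$ to rewrite $w^{T}Dv$ as a manifestly positive quantity, is what makes the sign comparison go through. The remaining bookkeeping is to verify that the strong-connectivity hypotheses on both $P$ and $B=D^{-1}L$ legitimize the two separate invocations of Lemma~\ref{perron_frob_3} (and the passage to positive left eigenvectors via transposes), and that $\rho(B)>0$ so the division by $\rho(B)$ is valid.
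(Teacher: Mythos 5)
Your proposal is correct, but note at the outset that the paper contains no proof of Lemma~\ref{splitting_lemma} to compare against: it is imported verbatim from \cite{2021:Ma_Olshevsky_Arxiv}, so your argument must stand on its own, and it does. The reductions are all sound: for the first bullet, the certificate equivalence via the right Perron vector ($d=u_{\max}(P)$ when $\lambda_{\max}(P)\le 0$) and the left Perron vector $w>\bzero$ of $P$ (obtained from Lemma~\ref{perron_frob_3} applied to $P^{T}$, which is legitimate since transposition preserves both strong connectivity and the off-diagonal sign pattern); for the second bullet, the shift $P=B-I$, where the diagonal perturbation is harmless because self-loops play no role in strong connectivity; and for the third bullet, the identity $Pv=(\rho(B)-1)Dv$ paired against $w$, giving $\lambda_{\max}(P)\,(w^{T}v)=(\rho(B)-1)\,(w^{T}Dv)$ with both weights positive. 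Your diagnosis of the one genuine trap is also accurate: naive certificate passing is asymmetric, since $Pd\le\bzero$ yields $Ld\le Dd$ and hence $Bd\le d$ after multiplying by $D^{-1}\ge\bzero$, but the reverse direction cannot be obtained by multiplying $Bd\le d$ through by $D$, which is not elementwise non-negative; routing the comparison through the eigenvalue identity, and in particular rewriting $w^{T}Dv=\rho(B)^{-1}w^{T}Lv>0$, is precisely what makes the sign equivalence two-sided. Two small points worth making explicit: the claim $\rho(B)>0$ deserves its one-line justification (a strongly connected digraph contains a cycle, so the non-negative matrix $B$ is not nilpotent), and in the degenerate case $n=1$ with $L=\bzero$ your division by $\rho(B)$ fails, although the assertion is trivially true there and most irreducibility conventions exclude that case anyway.
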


Lemma~\ref{perron_frob_3} presents a version of the Perron-Frobenius theorem for strongly connected matrices with non-negative off-diagonal elements. 
It establishes the existence of a dominant real eigenvalue, that is unique and simple, with a positive eigenvector.
Lemma~\ref{splitting_lemma} characterizes matrix stability for both a continuous-time and a discrete-time setting, 
establishing relationships between matrix properties and stability conditions.

\subsection{Balancing of Nonnegative Matrices}\label{matrix_bal}

Matrix balancing is a mathematical problem that involves adjusting the elements of a given matrix to meet certain criteria or constraints. 
In this context, a square matrix $A^{n \times n} \in \mathbb{R}^{n \times n}_{\geq 0}$ is considered balanced if $A \mathbf{1} = A^T \mathbf{1}$. 
This means that for every $i = 1,...,n$, the sum of the entries in row $i$ of $A$ is equal to the sum of the entries in column $i$. 
The problem of balancing matrix $A$ is equivalent to finding a non-negative diagonal matrix $D$ such that $D A D^{-1}$ is balanced. 

Early works \cite{1967:Sinkhorn_Knopp_343_348} consider matrix balancing as the condition for a sequence of matrices to convergence to a doubly stochastic limit. 
More recent works \cite{2014:Rikos_Hadjicostis_190_201} aimed to balance a given matrix in a distributed way by considering that each row(/column) is the set of incoming(/outgoing) links of the corresponding node in a given network (see \cite{2016:Idel} for a survey on matrix balancing). 
Here, we rely on the approach presented in \cite{2017:Cohen_Vladu_902_913} in which the authors showed that matrix balancing can be achieved in linear time. 
Following \cite{2017:Cohen_Vladu_902_913}, we will summarize the complexity by stating that it is proportional to the number of nonzero entries of the matrix (as discussed in \cite[Sec.~$2.2$]{2021:Ma_Olshevsky_Arxiv}).

\section{Problem Formulation}\label{sec:probform}

Our goal is to optimally modify the elements of the COVID-19 model (cf.Sec.~\ref{original_COVID19_model}) in a non-uniform way to mitigate the spread of the disease while minimizing economic costs. 
We focus on two problems, denoted by \textbf{P1} and \textbf{P2}: 
\begin{itemize}
    \item[{\bf P1}] 
In the first problem, we aim to optimally modify the travel rate matrix $T$ (e.g., by placing limited travel restrictions) in order to curb the spread of infections (see Section~\ref{sec:travel_rates}). 
\item[\textbf{P2}] 
In the second problem, our objective is to extend the COVID-19 model (cf. \eqref{original_COVID19_model}) to encompass quarantined individuals. 
Subsequently, we aim to optimally select location-dependent quarantine rates in order to curb the spread of infections in the network while simultaneously minimizing the associated economic cost (see Section~\ref{sec:quarantine_rates}). 
\end{itemize}

\section{Optimal Travel Rates for Epidemic Control}\label{sec:travel_rates}



\subsection{Problem Structure}

The network spread model of COVID-19 in \eqref{original_COVID19_model} can be written in matrix form as 
\begin{equation}\label{matrix_infections}
\begin{pmatrix}
    \dot{s} \\
    \dot{x^{\mathrm{a}}} \\
    \dot{x^{\mathrm{s}}} \\
    \dot{h} \\ 
\end{pmatrix}
=
\begin{pmatrix}
    \bzero & -\beta^{\mathrm{a}} \diag{(s)} A & -\beta^{\mathrm{s}} \diag{(s)} A & \bzero \\
    \bzero & C & \beta^{\mathrm{s}} \diag{(s)} A & \bzero\\
    \bzero & \epsilon I & -r^{\mathrm{s}} I & \bzero\\
    \bzero & r^{\mathrm{a}} I & r^{\mathrm{s}} I & \bzero \\
\end{pmatrix}
\begin{pmatrix}
    s \\
    x^{\mathrm{a}} \\
    x^{\mathrm{s}} \\
    h \\
\end{pmatrix} , 
\end{equation}
where $C = \beta^{\mathrm{a}} \diag{(s)} A - (\epsilon + r^{\mathrm{a}})I$. 
Let us consider matrix $A$ as a function of $\tau = \text{vec}(T)$ (where $T$ is the travel matrix defined in Section~\ref{Travel_Matrix}). 
For \eqref{matrix_infections} we have that \cite[Assumption~$1$]{smith2022convex} and \cite[Assumption~$2$]{smith2022convex} hold. 
For this reason, in \eqref{matrix_infections} we can decouple the dynamics of $\dot{x}$ from $\dot{s}$ and $\dot{h}$, and focus on them. 
Therefore, for our analysis we focus on optimizing the travel rates in matrix $M(t_0,\tau)$ defined as 
\begin{equation}\label{inf_matrix}
M(t_0,\tau) = 
    \begin{pmatrix}
    C' & \beta^{\mathrm{s}} \diag{(s(t_0))} A(\tau) \\
    \epsilon I & -r^{\mathrm{s}} I\\
    \end{pmatrix},
\end{equation}
where $C' = \beta^{\mathrm{a}} \diag{(s(t_0))} A(\tau) - (\epsilon + r^{\mathrm{a}}) I$.  
For a fixed $t_0$ we define $f(\tau) = \lambda_{\max} \left( M(t_0, \tau) \right)$, and we aim to solve the following minimization problem: 
\begin{equation}\label{problem_formulation_P1}
    \begin{split}
        &{\min_{\tau}} \quad f(\tau) \\
        &\text{s.t.} \quad \| \tau - \tau_{0}\|_1 \leq b,\\
        & \,\,\qquad \tau \geq \bzero\, 
    \end{split}
\end{equation}
where $\tau_0$ is the vector of the initial travel rates and $b$ is a budget on the amount  of travel rate change (measured using an $\ell_1$ norm) from their initial values. 
Note that $b$ is important for preventing drastic changes over the travel rates which would result in considerable social disruption. 
This constraint is essential for addressing practical considerations related to the feasibility of changing the travel rates. 

\begin{remark} 
    In \eqref{problem_formulation_P1}, we focus on minimizing the dominant (or maximum) eigenvalue of the matrix $M(t_0,\tau)$ by optimizing the travel rates $\tau$. 
    The dominant eigenvalue plays a crucial role in characterizing the rate of infection spread within the COVID-19 model described in \eqref{matrix_infections}. 
    By minimizing the dominant eigenvalue, we effectively target the control of infection spread dynamics. 
    Additionally, by minimizing the dominant eigenvalue we can achieve a specific decay rate of infected cases. 
    For example, when the maximum eigenvalue is less than a specified threshold $-\alpha$ at time $t_0$, the infected cases will decay at a rate of at least $e^{-\alpha t}$ for times $t \geq t_0$. 
\end{remark}

\subsection{Proposed Solution}
We apply {\em Projected Gradient Descent (PGD)} by projecting onto the constraint set, specifically the non-negative orthant of the \( l_1 \) ball. To ensure convergence of the PGD, we employ a suitable stepsize choice using the backtracking line search technique. This adaptive strategy iteratively adjusts the stepsize based on sufficient decrease conditions, enhancing the robustness and convergence of the optimization process (see \cite{luenberger2003linear}). Considering these characteristics, our proposed algorithm is as follows:
\begin{equation}\label{PGD_iteration}
    \begin{split}
        &\textit{Initialization:} \\
        &\text{set} \ \gamma = 1 \ \text{and choose} \ \beta \in (0, 1), \\
        &\textit{Iteration:} \\
        &\text{while} \ f (\tau_{k} - \gamma \nabla f (\tau_{k})) > f (\tau_{k}) - \frac{\gamma}{2} \| \nabla f (\tau_{k}) \|^2, \\ 
        &\ \ \ \text{set} \ \gamma \leftarrow \beta \gamma, \\ 
        &y_{k+1} = \tau_{k} - \gamma \nabla {f (\tau_{k}}), \\
        &\tau_{k+1} = \underset{\tau \in \Omega}{\arg\min} {\| \tau - y_{k+1} \|}_2 ^2, 
    \end{split}
\end{equation}
 where $\gamma$ is the step size adjusted via backtracking line search, $\beta$ is the scaling parameter of $\gamma$,
$\tau_{k}$ is in the vector form (i.e., $\tau_k = \text{vec}(T_k)$) of the travel rates matrix $T_k$ at time step $k$, and $\Omega$ is the projection set.

Focusing on \eqref{PGD_iteration}, the update involves a gradient descent step followed by a projection onto the feasible set \( \Omega \). Each step presents unique challenges. Computing the gradient of the maximum eigenvalue of a matrix is a complex task that requires specialized techniques, which we will detail in our subsequent analysis. For the projection step, it is essential to consider appropriate constraints and transformations to ensure the optimal travel rates are practical and realistic.

\noindent
\textbf{Gradient Descent Step.} 
Computing the gradient of a matrix maximum eigenvalue may present challenges 
\cite{chen2004smooth}. 
To address them, we leverage \cite[Thm.~$2$]{1985:Magnus_191}, which we state for completeness. 

\begin{theorem}[\hspace{-.11cm} \cite{1985:Magnus_191}]\label{deriv_max_eigenvalue_calculation}
Let $\lambda_0$ be a simple eigenvalue of a matrix $Z_0 \in \mathbb{C}^{n \times n}$ and
let $u_0$ be an associated eigenvector, so that $Z_0 u_0 = \lambda_0 u_0$. 
Then a (complex) function $\lambda$ and a (complex) vector function $u$ are defined for all $Z$ in some neighborhood $\scrN(Z_0) \in \mathbb{C}^{n \times n}$ of $Z_0$, such that
$$
\lambda(Z_0) = \lambda_0, \ \ u(Z_0) = u_0 \ , 
$$ 
and 
$$
Zu = \lambda u, \ u_0 ^* u = 1, \ \text{for} \ Z \in \scrN(Z_0). 
$$
Moreover, the functions $\lambda$ and $u$ are $\infty$ times differentiable on $\scrN(Z_0)$ and the differentials at $Z_0$ are
\begin{equation}
    d\lambda = v_0 ^ * (dZ) u_0 /v_0^* u_0 \ ,
\end{equation}
and
\begin{equation}\label{du}
    du = (\lambda_0 I - Z_0) ^ {+} \left(I - \frac{u_0 v_0^{*}}{v_0^* u_0} \right) (dZ) u_0 \ , 
\end{equation}
where $v_0$ is the eigenvector associated with the eigenvalue $\Bar{\lambda}_0$ of $Z_0 ^*$, so that $Z_0 ^* v_0 = \Bar{\lambda}_0 v_0$, $d\lambda$ is the differential of the (complex) function $\lambda$, and $dZ$ is the differential of $Z$.
\end{theorem}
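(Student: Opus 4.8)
The plan is to derive the result as an application of the (holomorphic) implicit function theorem to the system that characterizes an eigenpair, augmented by the normalization $u_0^* u = 1$. Concretely, I would define $F:\mathbb{C}^{n\times n}\times\mathbb{C}^n\times\mathbb{C}\to\mathbb{C}^n\times\mathbb{C}$ by $F(Z,u,\lambda)=\big((Z-\lambda I)u,\ u_0^*u-1\big)$, which is polynomial (hence analytic) in all its arguments and vanishes at the base point $(Z_0,u_0,\lambda_0)$, where $u_0$ is taken normalized so that $u_0^*u_0=1$. The functions $\lambda(Z)$ and $u(Z)$ promised by the theorem are precisely the implicit solutions of $F(Z,u,\lambda)=0$ near this point, so the claimed infinite differentiability follows at once from the analyticity of $F$, provided the invertibility hypothesis of the implicit function theorem is checked.

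First I would verify invertibility of the partial Jacobian of $F$ with respect to $(u,\lambda)$ at the base point, namely the bordered matrix $J=\begin{pmatrix} Z_0-\lambda_0 I & -u_0\\ u_0^* & 0\end{pmatrix}$. This is the crux of the argument and the place where simplicity of $\lambda_0$ is indispensable. Suppose $J\begin{pmatrix}x\\\mu\end{pmatrix}=\bzero$, i.e.\ $(Z_0-\lambda_0 I)x=\mu u_0$ and $u_0^*x=0$. Let $v_0$ be the left eigenvector with $Z_0^*v_0=\bar\lambda_0 v_0$, equivalently $v_0^*(Z_0-\lambda_0 I)=\bzero^*$. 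Left-multiplying the first equation by $v_0^*$ gives $0=\mu\, v_0^*u_0$; since $\lambda_0$ is simple we have $v_0^*u_0\neq 0$, forcing $\mu=0$. Then $(Z_0-\lambda_0 I)x=\bzero$ places $x$ in the one-dimensional eigenspace $\mathrm{span}(u_0)$, and $u_0^*x=0$ forces $x=\bzero$. Hence $J$ is nonsingular and the implicit function theorem applies, yielding unique analytic maps $u(Z),\lambda(Z)$ with the stated values at $Z_0$.

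Next I would obtain the differentials by differentiating the two defining identities at $Z_0$. Differentiating $Zu=\lambda u$ gives $(dZ)u_0+(Z_0-\lambda_0 I)\,du=(d\lambda)u_0$, while differentiating $u_0^*u=1$ gives $u_0^*\,du=0$. Left-multiplying the first relation by $v_0^*$ annihilates the $(Z_0-\lambda_0 I)$ term and isolates $d\lambda=v_0^*(dZ)u_0/(v_0^*u_0)$, which is the first formula. Substituting this back yields the singular linear system $(\lambda_0 I-Z_0)\,du=\big(I-u_0 v_0^*/(v_0^*u_0)\big)(dZ)u_0=:w$.

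Finally I would solve for $du$ using the Moore–Penrose inverse. Since $\lambda_0$ is simple, $\mathrm{range}(\lambda_0 I-Z_0)=\{z: v_0^*z=0\}$, and a direct check shows $v_0^*w=0$, so $w$ lies in that range and the system is consistent with general solution $du=(\lambda_0 I-Z_0)^+w+c\,u_0$. The particular solution $(\lambda_0 I-Z_0)^+w$ lies in $\mathrm{range}\big((\lambda_0 I-Z_0)^*\big)=\mathrm{span}(u_0)^\perp$, hence already satisfies $u_0^*[(\lambda_0 I-Z_0)^+w]=0$; combined with the normalization $u_0^*\,du=0$ this forces $c\,u_0^*u_0=0$, i.e.\ $c=0$, giving the second formula. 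I expect the main obstacle to be the bordered-matrix nonsingularity step (the only place simplicity is truly used) together with the careful bookkeeping of the range/kernel identities of the pseudoinverse needed to pin the free constant $c$; the remaining steps are routine differentiation.
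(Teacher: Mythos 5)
Your proof is correct. Note, however, that the paper contains no proof of this statement to compare against: it is quoted verbatim from the cited reference (Magnus, 1985) ``for completeness,'' and your argument is essentially the proof given there---Magnus likewise applies the analytic implicit function theorem to the bordered system $F(Z,u,\lambda)=\bigl((Z-\lambda I)u,\ u_0^*u-1\bigr)=\bzero$, with nonsingularity of the bordered Jacobian as the key lemma, and then reads off $d\lambda$ and $du$ by implicit differentiation exactly as you do. The one fact you invoke without justification is that $v_0^*u_0\neq 0$ when $\lambda_0$ is algebraically simple; it is standard (if $v_0^*u_0=0$, a Schur or Jordan reduction produces a generalized eigenvector at $\lambda_0$, contradicting simplicity) but deserves a line, since both the bordered-matrix step and the identification $\mathrm{range}(\lambda_0 I-Z_0)=\{z:\ v_0^*z=0\}$ (which requires geometric multiplicity one, again from simplicity) hinge on it; your pseudoinverse bookkeeping---checking consistency via $v_0^*w=0$ and eliminating the free component along $u_0$ using $u_0^*\,du=0$ together with $\mathrm{range}\bigl((\lambda_0 I-Z_0)^{+}\bigr)=\mathrm{span}(u_0)^{\perp}$---is exactly right.
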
 
Let $Z_0 = M(t_0,\tau_{k})$ in Theorem~\ref{deriv_max_eigenvalue_calculation}, with $\lambda_{\max}$ and $u_{\max}$ denoting the maximum eigenvalue and corresponding eigenvector of $M(t_0,\tau_{k})$, so $M(t_0,\tau_{k}) u_{\max} = \lambda_{\max} u_{\max}$. According to Theorem~\ref{deriv_max_eigenvalue_calculation}, in a neighborhood $\scrN(M(t_0,\tau_{k})) \in \mathbb{C}^{n \times n}$, there exists a function $f$ such that $f(M(t_0,\tau_{k})) = \lambda_{\max}$ and
\begin{equation}\label{gradient_express_M}
\frac{\partial f(M(t_0,\tau_{k}))}{\partial \tau_{ij}} = \frac{v^*_{\max} \left(\frac{\partial M(t_0,\tau_{k})}{\partial \tau_{ij}}\right) u_{\max}}{v^*_{\max} u_{\max}},
\end{equation}
where $v_{\max}$ is the eigenvector associated with $\Bar{\lambda}_{\max}$ of $M(t_0,\tau_k)^*$, satisfying $M(t_0,\tau_k)^* v_{\max} = \Bar{\lambda}_{\max} v_{\max}$. Since $M(t_0,\tau_{k})$ is real, ${M(t_0,\tau_{k})}^* = {M(t_0,\tau_{k})}^T$ and ${M(t_0,\tau_{k})} ^T v_{\max} = \lambda_{\max} v_{\max}$. By the Perron–Frobenius theorem (Lemma~\ref{perron_frob_3}), the dominant eigenvector is real and positive, thus $v^*_{\max} = v^T_{\max}$. Note that $v^T_{\max}$ and $u_{\max}$ are the left and right eigenvectors corresponding to $\lambda_{\max}$ of $M(t_0, \tau_k)$. Consequently, \eqref{gradient_express_M} becomes
\begin{equation}\label{df}
\frac{\partial f(M(t_0,\tau_{k}))}{\partial \tau_{ij}} = \frac{v^T_{\max} \left(\frac{\partial M(t_0,\tau_{k})}{\partial \tau_{ij}}\right) u_{\max}}{v^T_{\max} u_{\max}},
\end{equation}
providing the components of the gradient vector $\nabla f(M(t_0,\tau_k))$. Using \eqref{df}, we perform the gradient descent step specified in \eqref{PGD_iteration}.

\noindent
\textbf{Projection Step.} 
We now focus on the projection step of \eqref{PGD_iteration}. This step involves minimizing \( \| \tau - y_{k+1} \|_2^2 \) over \(\tau\) subject to the constraint set of~\eqref{problem_formulation_P1}, resulting in linear constraints. By converting this problem into a convex {\em Quadratic Programming (QP)} problem \cite{nocedal2006quadratic}, the optimal \(\tau\) can be efficiently found at each step.

\noindent
\textbf{Convergence Analysis.} 
For analyzing the convergence of the proposed algorithm to a stationary point and the operation of  \eqref{PGD_iteration}, we present the following lemmas which are necessary for the subsequent development.

Note that the spectral norm of the Hessian matrix $\nabla^2 f(\tau)$ is $\vert \lambda_{\max}(\nabla^2 f(\tau)) \vert$ because the Hessian matrix of $f(\tau)$ is symmetric and real.




\begin{lemma}\label{cont_eigval_M}
    The eigenvalues of the matrix $M(t_0,\tau)$ (cf. \eqref{inf_matrix}) are continuous functions of $\tau$. 
\end{lemma}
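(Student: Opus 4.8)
The plan is to exhibit the map $\tau \mapsto \{\text{eigenvalues of } M(t_0,\tau)\}$ as a composition of continuous maps, namely
\[
\tau \;\longmapsto\; M(t_0,\tau) \;\longmapsto\; p_\tau(\cdot) \;\longmapsto\; \big(\text{roots of } p_\tau\big),
\]
where $p_\tau(\lambda) := \det(\lambda I - M(t_0,\tau))$ is the characteristic polynomial, whose roots are precisely the eigenvalues of $M(t_0,\tau)$. Continuity of the full map then follows from continuity of each link in the chain.

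First I would establish that the entries of $M(t_0,\tau)$ are continuous in $\tau$. Since $t_0$ is fixed, $s(t_0)$ is a constant vector and $\diag(s(t_0))$ a constant matrix, so all $\tau$-dependence enters through $A(\tau)$. By \eqref{travel_matrix_definition}, each entry $a_{ij}(\tau) = \sum_{l} \tau_{il}\tau_{jl} N_j \big/ \sum_{k} N_k \tau_{kl}$ is a rational function of $\tau$ whose denominators $\sum_{k} N_k \tau_{kl}$ are strictly positive on the feasible region (each location receives positive total inflow, e.g. via its self-edge). Hence $A(\tau)$, and therefore every entry of $M(t_0,\tau)$ in \eqref{inf_matrix}, is continuous in $\tau$. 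Next, the coefficients of $p_\tau$ are signed sums of principal minors of $M(t_0,\tau)$, hence polynomials in its entries; composing with the previous step shows that each coefficient of $p_\tau$ is continuous in $\tau$.

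The hard part will be the last link: passing from the coefficients of $p_\tau$ to its roots. Here I would invoke the classical fact that the roots of a monic polynomial, regarded as an unordered multiset in $\mathbb{C}$, depend continuously (in the Hausdorff metric) on the coefficients. The subtlety — and the reason one cannot naively assert that ``the $i$-th eigenvalue is a continuous scalar function'' — is that eigenvalue labels can fail to be continuous across parameter values where multiplicities coalesce; continuity is guaranteed only at the level of the root set, which is exactly what the lemma asserts. I would cite a standard reference for this result (e.g. Ostrowski's theorem, or Horn and Johnson). Chaining the three continuous maps then yields the claim.

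Finally, I would remark that for the dominant eigenvalue the conclusion is in fact much stronger and cleaner: $M(t_0,\tau)$ is essentially non-negative and strongly connected, so by Lemma~\ref{perron_frob_3} its dominant eigenvalue is simple, and Theorem~\ref{deriv_max_eigenvalue_calculation} then guarantees that $\lambda_{\max}(M(t_0,\tau))$ is not merely continuous but infinitely differentiable in $\tau$ near any feasible point — which is the branch actually used by the gradient computation in \eqref{df}. The multiset argument above is what is needed to cover the remaining (possibly non-simple) eigenvalues and thereby the full statement of the lemma.
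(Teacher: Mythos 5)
Your proposal is correct and takes essentially the same approach as the paper, whose proof is a one-line appeal to the continuity of the roots of the characteristic polynomial of $M(t_0,\tau)$ (citing Zedek's classical result). Your additional steps---continuity of the entries of $A(\tau)$ from \eqref{travel_matrix_definition}, continuity of the polynomial coefficients, and the careful ``unordered multiset'' formulation of root continuity---simply make explicit what the paper's citation leaves implicit, and the closing remark on smoothness of the dominant eigenvalue is a correct (if extra) observation consistent with Theorem~\ref{deriv_max_eigenvalue_calculation}.
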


\begin{proof}
    The proof is a direct result of the continuity of the roots of the characteristic polynomial of $M(t_0,\tau)$ (see  \cite{zedek1965continuity}). 
\end{proof}

\begin{lemma}\label{L_smooth_function}
    The function $f(\tau) = \lambda_{\max} \left( M(t_0, \tau) \right)$ is $L$-smooth. 
\end{lemma}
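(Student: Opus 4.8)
The plan is to show that $f$ is in fact $C^\infty$ on the feasible region and then bound its Hessian on the compact constraint set, using the standard fact that a twice continuously differentiable function whose Hessian is bounded in spectral norm on a convex set has a Lipschitz continuous gradient there.

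First I would argue that $M(t_0,\tau)$ is a smooth matrix-valued function of $\tau$. From \eqref{travel_matrix_definition} each entry $a_{ij}(\tau) = \sum_{l} \tau_{il}\tau_{jl} N_j / \sum_k N_k \tau_{kl}$ is a rational function of $\tau$ whose denominators $\sum_k N_k\tau_{kl}$ stay strictly positive on the feasible region (where travel rates are not identically zero along any column). Hence $A(\tau)$, and therefore every entry of $M(t_0,\tau)$ in \eqref{inf_matrix}, is $C^\infty$ in $\tau$. Next I would invoke the Perron--Frobenius structure: the off-diagonal blocks $\beta^{\mathrm{s}}\diag(s(t_0))A(\tau)$ and $\epsilon I$, together with the off-diagonal entries of $C'$, are all non-negative, so $M(t_0,\tau)$ is essentially non-negative; combined with strong connectivity (inherited from $A$ and the coupling through $\epsilon$ and $\beta^{\mathrm{s}}$), Lemma~\ref{perron_frob_3} guarantees that $\lambda_{\max}(M(t_0,\tau))$ is simple with a real positive eigenvector at every $\tau$ in the region.

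Because $\lambda_{\max}$ is simple throughout, Theorem~\ref{deriv_max_eigenvalue_calculation} applies at each point: in a neighborhood $\scrN(M(t_0,\tau))$ the dominant eigenvalue is an infinitely differentiable function of the matrix entries, and composing with the smooth map $\tau\mapsto M(t_0,\tau)$ shows $f(\tau)=\lambda_{\max}(M(t_0,\tau))$ is $C^\infty$, with gradient components given by \eqref{df}. In particular the Hessian $\nabla^2 f(\tau)$ exists and is continuous. To get $L$-smoothness I restrict to the feasible set $\Omega=\{\tau:\|\tau-\tau_0\|_1\le b,\ \tau\ge\bzero\}$ of \eqref{problem_formulation_P1}, which is closed, bounded, and convex. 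A continuous Hessian attains a finite maximum spectral norm $L:=\max_{\tau\in\Omega}\|\nabla^2 f(\tau)\|_2$ on the compact set $\Omega$ (this is the quantity $|\lambda_{\max}(\nabla^2 f(\tau))|$ identified before the lemma, since the Hessian is real symmetric), and convexity of $\Omega$ lets me integrate the Hessian along the segment joining any $\tau,\tau'\in\Omega$ and apply the mean-value inequality to conclude $\|\nabla f(\tau)-\nabla f(\tau')\|\le L\|\tau-\tau'\|$.

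The main obstacle is the simplicity, and hence the differentiability, of $\lambda_{\max}$ everywhere on $\Omega$, since $f=\lambda_{\max}$ is in general only smooth where the dominant eigenvalue stays simple. The delicate points are the boundary of $\Omega$, where some $\tau_{ij}$ vanish and strong connectivity of $A(\tau)$ (and thus of $M(t_0,\tau)$) could in principle be lost, and the requirement that all denominators $\sum_k N_k\tau_{kl}$ remain positive so that $A(\tau)$ stays well defined and smooth. Verifying that the essential non-negativity and strong-connectivity hypotheses of Lemma~\ref{perron_frob_3} hold throughout $\Omega$ is what makes the global simplicity argument go through, and this is where I would concentrate the effort; continuity of the eigenvalues from Lemma~\ref{cont_eigval_M} can then be used to control any residual boundary behavior through a limiting argument.
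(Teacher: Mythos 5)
Your proof follows essentially the same route as the paper's: apply Theorem~\ref{deriv_max_eigenvalue_calculation} to get that $f$ is infinitely differentiable, use compactness of the relevant set to bound the continuous Hessian in spectral norm, and conclude Lipschitz continuity of the gradient. You are in fact more careful than the paper --- you explicitly verify smoothness of $\tau \mapsto M(t_0,\tau)$ from \eqref{travel_matrix_definition} and the simplicity hypothesis of Theorem~\ref{deriv_max_eigenvalue_calculation} via Lemma~\ref{perron_frob_3}, steps the paper's proof invokes implicitly --- and the boundary subtlety you flag (possible loss of strong connectivity of $A(\tau)$ or vanishing column sums $\sum_k N_k \tau_{kl}$ when travel rates hit zero on the boundary of the feasible set) is left equally unaddressed by the paper's own proof, so it is not a gap relative to the paper.
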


\begin{proof}
    Before proving the lemma, note that the set consisting of all $M(t_0, \tau)$ matrices over all the possible travel rate matrices $T$ (defined in Section~\ref{Travel_Matrix}) is bounded and compact.

    Let us consider the function $f(\tau)$ representing the dominant eigenvalue of $M(t_0,\tau)$. 
    From Thm.~\ref{deriv_max_eigenvalue_calculation}, we have that $f(\tau)$ is $\infty$ times differentiable.
    Therefore, every element of the Hessian of $f(\tau)$ is continuous and differentiable. 
    From the continuity of the Hessian and continuity of the maximum eigenvalue (see Thm.~\ref{deriv_max_eigenvalue_calculation}) we have that the largest element of the Hessian achieves its maximum value on the compact set consisting of all $M(t_0,\tau)$ over all the possible travel rate matrices $T$. As a result, the spectral norm of the Hessian of $f(\tau)$ ($\Vert \nabla^2 f(x)\Vert _2$) is bounded and from~\cite{nesterov2003introductory} the maximum eigenvalue function $f(\tau)$ has a Lipschitz continuous gradient.
\end{proof}

After showing that the dominant eigenvalue function $f(\tau)$ is $L$-smooth, we are now ready to present the following convergence result for the PGD operation with backtracking line search described in \eqref{PGD_iteration}.

\begin{theorem}\label{PGD_convergence_rate}
    Let us consider the function $f(\tau_k) = \lambda_{\max} \left( M(t_0, \tau_k) \right)$.
    Executing the operation described in \eqref{PGD_iteration}, we have 
\begin{equation}
\lim_{k \rightarrow \infty} \nabla(f(\tau_k)) = 0 . 
\end{equation}
\end{theorem}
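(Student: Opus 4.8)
The plan is to run the standard ``sufficient decrease plus boundedness below'' template for first-order methods, using the $L$-smoothness established in Lemma~\ref{L_smooth_function} as the engine and the compactness of the feasible set to guarantee that $f$ is bounded below. First I would record that, by Lemma~\ref{cont_eigval_M}, $f(\tau)=\lambda_{\max}(M(t_0,\tau))$ is continuous, and that the feasible set $\Omega = \{\tau \ge \bzero : \|\tau - \tau_0\|_1 \le b\}$ is closed and bounded, hence compact. Therefore $f$ attains a finite minimum $f^\star$ on $\Omega$ and is bounded below there. Since every projection step returns $\tau_{k+1}\in\Omega$ and $\tau_0\in\Omega$, the whole iterate sequence lives in the compact set $\Omega$, which also keeps $\|\nabla f(\tau_k)\|$ and the associated Perron eigenvectors uniformly bounded.

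The next step is to show the backtracking line search terminates with a step size bounded away from zero. By Lemma~\ref{L_smooth_function}, $f$ has an $L$-Lipschitz gradient, so the descent lemma gives $f(\tau - \gamma \nabla f(\tau)) \le f(\tau) - \gamma\bigl(1 - \tfrac{L\gamma}{2}\bigr)\|\nabla f(\tau)\|^2$. For every $\gamma \le 1/L$ the coefficient satisfies $1 - \tfrac{L\gamma}{2} \ge \tfrac12$, so the Armijo-type test in \eqref{PGD_iteration} is met; hence the while-loop exits after finitely many halvings, and the accepted step obeys $\gamma_k \ge \gamma_{\min} := \min\{1, \beta/L\} > 0$ for all $k$. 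This simultaneously yields the per-step inequality $f(y_{k+1}) \le f(\tau_k) - \tfrac{\gamma_k}{2}\|\nabla f(\tau_k)\|^2$ for the unprojected trial point. I would then convert per-iteration descent into a telescoping bound: summing $f(\tau_k) - f(\tau_{k+1}) \ge c\,\|g_k\|^2$ and using $f(\tau_k) \ge f^\star$ gives $\sum_{k\ge 0} \|g_k\|^2 \le (f(\tau_0)-f^\star)/c < \infty$, which forces $g_k \to 0$ and hence the claimed limit.

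The main obstacle is reconciling the \emph{unconstrained} Armijo condition with the \emph{projection}. The line search certifies descent for the unprojected point $y_{k+1}$, whereas the iterate is $\tau_{k+1} = P_\Omega(y_{k+1})$, and the projection can raise the objective, so I must bridge $f(\tau_{k+1})$ and $f(y_{k+1})$. The clean device is the gradient mapping $G_k := \tfrac{1}{\gamma_k}\bigl(\tau_k - P_\Omega(\tau_k - \gamma_k \nabla f(\tau_k))\bigr)$, for which $L$-smoothness together with convexity of $\Omega$ and nonexpansiveness of $P_\Omega$ yields the correct descent guarantee $f(\tau_{k+1}) \le f(\tau_k) - \tfrac{\gamma_k}{2}\|G_k\|^2$; the telescoping argument above then delivers $G_k \to 0$, i.e. first-order stationarity (a KKT point) of \eqref{problem_formulation_P1}.

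Finally, passing from $G_k \to 0$ to the literal statement $\nabla f(\tau_k) \to 0$ is the one delicate point I would flag: the two quantities coincide only where the active constraints of $\Omega$ are inactive, i.e. on the interior. I would therefore either strengthen the hypothesis so that the limit point lies in the interior of $\Omega$ (where $G_k$ and $\nabla f$ agree), or restate the conclusion as the vanishing of the gradient mapping, which is the appropriate stationarity certificate for the constrained problem and is what the subsequent decay-rate guarantee on $\lambda_{\max}$ actually relies on.
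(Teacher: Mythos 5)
Your proposal follows essentially the same route as the paper, but where the paper disposes of the theorem in one line---citing the standard convergence of backtracking line search for differentiable $L$-smooth functions (via \cite{luenberger2003linear}), with Lemma~\ref{L_smooth_function} supplying the smoothness---you reconstruct that cited result in full: the descent lemma, termination of the Armijo loop with $\gamma_k \ge \min\{1,\beta/L\}$, boundedness of $f$ below on the compact feasible set, and the telescoping sum forcing the gradients to vanish. All of those steps are correct, so as an expansion of the paper's argument your write-up is sound and considerably more informative than the original.

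More importantly, your third and fourth paragraphs identify a real issue that the paper's proof glosses over. The cited backtracking result is for \emph{unconstrained} gradient descent, whereas \eqref{PGD_iteration} projects onto $\Omega$ after each step; the Armijo test certifies decrease only at the unprojected point $y_{k+1}$, and $f(\tau_{k+1})$ can exceed $f(y_{k+1})$ after projection. Your repair via the gradient mapping $G_k = \tfrac{1}{\gamma_k}\bigl(\tau_k - P_\Omega(\tau_k - \gamma_k \nabla f(\tau_k))\bigr)$ is the standard and correct patch: the projected-gradient descent inequality needs only convexity of $\Omega$ and $L$-smoothness of $f$ (no convexity of $f$), and telescoping then gives $G_k \to 0$. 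Your final caveat is also right: for a constrained problem the literal conclusion $\nabla f(\tau_k) \to 0$ holds only if limit points lie in the interior of $\Omega$; if the budget constraint $\|\tau-\tau_0\|_1 \le b$ or the nonnegativity constraint is active at a limit point---which is precisely the regime the paper's experiments operate in, since the budget is meant to bind---the correct conclusion is $G_k \to 0$ (first-order stationarity for \eqref{problem_formulation_P1}), not vanishing of the raw gradient. One minor refinement to your fix: since the paper's line search tests the unprojected point, an accepted $\gamma_k$ may exceed $1/L$ and the projected descent inequality with factor $1 - L\gamma_k/2$ could then fail, so to make the gradient-mapping argument airtight the Armijo test should be run on the projected trial point (or the step capped at $1/L$). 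These are corrections to the theorem statement and algorithm specification rather than to your argument, which is more careful than the paper's own.
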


\begin{proof}
    Let $\{\tau_k\}$ be the sequence of points generated by the operation described in \eqref{PGD_iteration}. 
    According to the convergence of the backtracking line search for a differentiable and $L$-smooth function (see \cite{luenberger2003linear}), it follows that $\lim_{k \rightarrow \infty} \nabla(f(\tau_k)) = 0$.
\end{proof}

\section{COVID-19 Model with Optimal Quarantine Rates}\label{sec:quarantine_rates}

In this section, our goal is to enhance the COVID-19 model in \eqref{original_COVID19_model} by incorporating the concept of quarantine for infected individuals. 
Assigning an economic cost to each quarantined individual, we will focus on optimization strategies to reduce the quarantined population effectively, with the dual objective of containing the infection spread within our network. 

\subsection{Integrating Quarantining into a COVID-19 Model}\label{SIQR_equations_subsec}

In order to extend the COVID-19 model, we implement the following modifications:
\begin{enumerate}
    \item Asymptomatic infected individuals $x_i^{\mathrm{a}}$ at each node $i$ $(a)$ exhibit automatic recovery at a rate of $r^{\mathrm{a}}$, or $(b)$ develop symptoms at a rate of $\epsilon$, or $(c)$ are quarantined with rate $q_i^{\mathrm{a}}$, with $k_i$ denoting the proportion of the quarantined population at node $i$.
    \item Symptomatic cases $x_i^{\mathrm{s}}$ $(a)$ experience automatic recovery at a rate of $r^{\mathrm{s}}$, or $(b)$ transition to a quarantined state at a rate of $q_i^{\mathrm{s}}$. 
    \item Quarantined individuals $k_i$ recover at a rate of $r^q$.
\end{enumerate}

Considering the above modifications, the extended COVID-19 model, called the Susceptible-Infected-Quarantined-Recovered (SIQR) model, is described as follows:
\begin{subequations}
\begin{align}
\dot{s}_i &= -s_i \sum_{j=1}^n a_{ij} (\beta^{\mathrm{a}} x_j^{\mathrm{a}} + \beta^{\mathrm{s}} x_j^{\mathrm{s}}), \label{susceptible_equation2} \\ 
\dot{x}_i^{\mathrm{a}} &= s_i \sum_{j=1}^n a_{ij} (\beta^{\mathrm{a}} x_j^{\mathrm{a}} + \beta^{\mathrm{s}} x_j^{\mathrm{s}}) - (\epsilon +r^{\mathrm{a}} + q_i^{\mathrm{a}}) x_i^{\mathrm{a}} \ , \label{x_asymp_equation2} \\
\dot{x}_i^{\mathrm{s}} &= \epsilon x_i^{\mathrm{a}} - (r^{\mathrm{s}} + q_i^{\mathrm{s}}) x_i^{\mathrm{s}}, \label{x_symp_equation2}\\
\dot{k}_i &= q_i^{\mathrm{a}} x_i ^ {\mathrm{a}} + q_i^{\mathrm{s}} x_i ^ {\mathrm{s}} - r^q k_i, \\
\dot{h}_i&= r ^{\mathrm{a}} x_i ^ {\mathrm{a}} + r ^{\mathrm{s}} x_i ^ {\mathrm{s}} + r^q k_i, 
\end{align}
\label{SIQR_model}
\end{subequations}
The proposed SIQR model uses the same notation as the model in \eqref{original_COVID19_model} and only differs in the incorporation of quarantining. Quarantined individuals come from both asymptomatic and symptomatic cases and recover at a rate of $r^q$. 


\subsection{Problem Formulation for Minimizing Quarantine Rates}

Let us define the column vectors $q^{\mathrm{a}} = (q^{\mathrm{a}}_1, \ldots, q^{\mathrm{a}}_n)$, and $q^{\mathrm{s}} = (q^{\mathrm{s}}_1, \ldots, q^{\mathrm{s}}_n)$, and the  column vector of quarantine rates $q=(q^{\mathrm{a}}, q^{\mathrm{s}})$. 
We decouple the dynamics of $\dot{x}$ from the other terms in \eqref{SIQR_model} for a disease-free equilibrium and focus on analyzing the $2n \times 2n$ sub-matrix $M(t, q)$ defined as: 
\begin{equation}\label{matrix_for_convergence}
M(t, q) = 
    \begin{pmatrix}
    E & \beta^{\mathrm{s}} \diag{(s(t))} A  \\
    \epsilon I & - (r^{\mathrm{s}} I+ \diag{(q^{\mathrm{s}}})) \\ 
    \end{pmatrix}, 
\end{equation}
where $E = \beta^{\mathrm{a}} \diag{(s(t))} A - (\epsilon + r^{\mathrm{a}}) I - \diag{(q^{\mathrm{a}}})$. 
Our objective is to optimally minimize quarantine rates, thereby curbing the spread of infection within our network, while simultaneously minimizing the associated economic costs. The quarantine rates $q_i^s, q_i^a$ can be assumed to be nonnegative. We note also that there is an upper bound on what these rates could plausibly be, since the {\em inverse} of these rates equals the amount of time a typical individual would spend before being quarantined under a scenario where individuals do not move across classes such as symptomatic, asymptomatic, recovered, etc. Consequently, we may assume without loss of generality a bound of $1$ on all the quarantine rates. 
To achieve this, we formulate the following optimization problem: 
\begin{equation} 
    \begin{split} 
        &{\min}_{q} \quad \sum_{i=1}^{n} \frac{z_i^a}{1-q_i^{\mathrm{a}}} + \frac{z_i^s}{1-q_i^{\mathrm{s}}} \\
        &\text{s.t.} \quad \lambda_{\max} (M(t_0, q)) \leq -\alpha, \\
        & \qquad \bzero \leq q \leq \bone, 
    \end{split}
    \label{problem_formulation2}
\end{equation}
where $z^a = (z^a_1, \ldots, z^a_n)$ and $z^s = (z^s_1, \ldots, z^s_n)$ represent the relative economic costs associated with quarantining asymptomatic and symptomatic cases at each node, respectively. 
The objective function in \eqref{problem_formulation2} is designed to minimize the overall economic cost associated with quarantine rates. 
The economic cost for each node is proportional to the effectiveness of quarantine (i.e., in \eqref{problem_formulation2}, increasing quarantine rates leads to higher costs) and reflects a nonlinear increase in costs that seeks to model the increased complexity and logistical costs associated with quarantining a large proportion of the population. 
The summation across all nodes aggregates the economic costs, and the optimization problem aims to find the quarantine rates $q$ that minimize this combined economic cost. 
The constraints ensure that the maximum eigenvalue of the associated sub-matrix $M(t_0, q)$ is bounded by $-\alpha$. 
This reflects a condition for controlling the infection spread. 
Therefore, our objective is to minimize the economic costs associated with quarantining individuals while simultaneously achieving a decay rate of $-\alpha$ for the infected cases. 


\subsection{Proposed Solution}

The choice of optimization method for solving problem \eqref{problem_formulation2} differs from our previous approach (i.e., PGD for optimizing the travel rates) due to the nature of the constraints involved.
Specifically, unlike the linear constraints in problem \eqref{problem_formulation_P1}, the optimization problem in \eqref{problem_formulation2} incorporates nonlinear constraints on quarantine rates.
Therefore, in this case, we will utilize an alternative approach.

\noindent
\textbf{Reduction of the Optimization Problem to Weight-Balancing.}
We are now ready to show that Problem \eqref{problem_formulation2} can be reduced to a matrix balancing problem. To this end, we define the following assumption, acknowledging that this assumption is typically fulfilled by the epidemic model parameters we use. Specifically, when fitting the models to data from Massachusetts using the existing model parameters in the literature, it holds. In fact, the matrix balancing algorithm provided the correct answer to problem \eqref{problem_formulation2}.

\begin{assum} \label{assump1}
Consider the matrix $B_0$ defined as:
\begin{equation*}
B_0 = 
\begin{pmatrix}
\beta^{\mathrm{a}} \operatorname{diag}(s(t_0)) A - (\epsilon + r^{\mathrm{a}} + 1)I & \beta^{\mathrm{s}} \operatorname{diag}(s(t_0)) A \\
\epsilon I & -(r^{\mathrm{s}} + 1) I
\end{pmatrix}
\end{equation*}
and let $m$ be the largest absolute value of any diagonal element of $B_0$. Then, 
defining 
$x = \min_j \left(\epsilon  \beta^{\mathrm{s}}  s_j(t_0)  A_{jj}\right)$,
we have \[
    1 + \frac{x}{m^2} \geq m \ .
\]
\end{assum}
Assumption~\ref{assump1} is needed to ensure the solution is non-negative.
Now, we present the main theorem of our study:

\begin{theorem} \label{SIQR_unconstrained}
    The minimum quarantining problem in \eqref{problem_formulation2} can be reduced to a matrix balancing problem provided that Assumption~\ref{assump1} is satisfied, the infection flow matrix $A$ (from Sec.~\ref{Travel_Matrix}) is strongly connected, $s(t_0) > 0$, and the problem is feasible. 
\end{theorem}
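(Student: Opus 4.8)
The plan is to show that at any optimal solution the spectral constraint is active, and then to convert the resulting stationarity system into the defining equations of a matrix-balancing instance. First I would record the structural facts that drive everything. Since $s(t_0)>\bzero$ and $A$ is strongly connected, the matrix $M(t_0,q)$ has nonnegative off-diagonal entries and its $2n\times 2n$ sign pattern is strongly connected (the blocks $\epsilon I$ and $\beta^{\mathrm{s}}\diag(s(t_0))A$ couple the symptomatic and asymptomatic indices), so Lemma~\ref{perron_frob_3} applies: $\lambda_{\max}(M(t_0,q))$ is a simple real eigenvalue with positive right and left eigenvectors $u,v>\bzero$. Because $q$ enters only through $-\diag(q)$ on the diagonal, increasing any $q_i$ strictly decreases the Perron root while strictly increasing the separable objective; hence at the optimum the constraint is tight, $\lambda_{\max}(M(t_0,q^\star))=-\alpha$. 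The assumed feasibility is exactly $\lambda_{\max}(M(t_0,\bone))=\lambda_{\max}(B_0)\le-\alpha$, which (under strict feasibility) makes $B_0+\alpha I$ a Hurwitz Metzler matrix, so that $G:=-(B_0+\alpha I)^{-1}$ is well defined and, being the inverse of an irreducible nonsingular M-matrix, entrywise positive.

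Next I would change variables to $d:=\bone-q\in(\bzero,\bone]$, so the objective becomes $\sum_i z_i/d_i$ and, using $B_0=M(t_0,\bone)$, the shifted constraint matrix rewrites as $M(t_0,q)+\alpha I=B_0+\alpha I+\diag(d)$. Writing the active constraint $(B_0+\alpha I+\diag(d))u=\bzero$ as $u=G\,\diag(d)\,u$ shows that it is equivalent to $\rho(\diag(d)\,G)=1$, i.e.\ to the basic reproduction number equalling one; this is the $R_0$ link advertised in the introduction. I would then form the KKT system for minimizing $\sum_i z_i/d_i$ subject to $\rho(\diag(d)G)\le1$. Differentiating the simple Perron root with the eigenvector perturbation formula of Theorem~\ref{deriv_max_eigenvalue_calculation} (equivalently \eqref{df}) gives $\partial\rho/\partial d_i$ proportional to $\zeta_i y_i/d_i$, where $y,\zeta>\bzero$ are the right and left Perron vectors of $\diag(d)G$; stationarity then yields the clean relation $d_i\,\zeta_i y_i\propto z_i$.

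The heart of the argument, and the step I expect to be hardest, is to recognize the coupled optimality system
\[
\diag(d)G\,y=y,\qquad \zeta\T\diag(d)G=\zeta\T,\qquad d_i\,\zeta_i y_i\propto z_i
\]
as a matrix-balancing problem. I would use the two eigenvector equations to eliminate $d$ and $\zeta$ in favor of $y$ (so $d=\diag(Gy)^{-1}y$), substitute the proportionality, and verify by direct computation that the surviving conditions on $y$ coincide with the balancing equations of the nonnegative matrix $\tilde G:=\diag(\sqrt{z})\,G\,\diag(\sqrt{z})$, where $z=(z^{\mathrm{a}},z^{\mathrm{s}})$ is the stacked cost vector: concretely, the diagonal scaling $\phi_i=\sqrt{z_i}/y_i$ makes $\diag(\phi)\tilde G\diag(\phi)^{-1}$ satisfy row-sum $=$ column-sum at every index. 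Since $A$ strongly connected forces $\tilde G>\bzero$ (hence primitive), the balancing problem admits an essentially unique solution computable in time proportional to the number of nonzero entries via \cite{2017:Cohen_Vladu_902_913}; from the balancing scaling I recover $y$, then $d$, and finally $q^\star=\bone-d$, establishing the claimed equivalence.

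The remaining obstacle is to certify that the recovered $q^\star$ is feasible for the box $\bzero\le q\le\bone$. Positivity $u,y>\bzero$ makes $d>\bzero$ automatic, so $q^\star\le\bone$ is immediate, and the only nontrivial requirement is $d\le\bone$, i.e.\ $q^\star\ge\bzero$. This is precisely where Assumption~\ref{assump1} is used: I would exploit the quantitative bound $1+x/m^2\ge m$, with $x=\min_j \epsilon\beta^{\mathrm{s}}s_j(t_0)A_{jj}$ and $m$ the largest diagonal magnitude of $B_0$, to control the entries of $G$ (equivalently of $\tilde G$ and its balancing solution) and thereby force $d_i\le1$ for every $i$. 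Combining the reduction of the previous paragraph with this feasibility certificate shows that solving \eqref{problem_formulation2} is equivalent to balancing $\tilde G$, which completes the proof.
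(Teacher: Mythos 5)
Your overall architecture matches the paper's more closely than you might expect: the paper also substitutes $w=\bone-q$ (your $d$), drops the box constraint $q\ge\bzero$, reduces the spectral constraint to a diagonal-similarity balancing problem for $\diag(z)\,(-B_0^{-1})$, and uses Assumption~\ref{assump1} solely to certify $q^*\ge\bzero$ afterwards. Your balancing instance is the same one in disguise: $\diag(\phi)\tilde G\diag(\phi)^{-1}$ with $\tilde G=\diag(\sqrt z)G\diag(\sqrt z)$ and $\phi_i=\sqrt{z_i}/y_i$ has entries $(z_i/y_i)G_{ij}y_j$, i.e.\ it is the paper's matrix $(\diag(d^*))^{-1}\diag(z)(-B_0^{-1})\diag(d^*)$ with $y$ in the role of $d^*$. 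Where you genuinely diverge is the mechanism of the reduction. The paper passes to the \emph{linear} objective $z\T v$ via $v_i=1/w_i$, converts the constraint to Hurwitzness of $-B_0^{-1}-\diag(v)$ using the splitting results of Lemma~\ref{splitting_lemma}, and then invokes \cite[Thm.~3]{2022:Ron_Bullo_2731_2736} as a black box. You instead re-derive that theorem's content from first principles: activity of the spectral constraint, the Perron-root derivative from Theorem~\ref{deriv_max_eigenvalue_calculation} giving $\partial\rho/\partial d_i=\zeta_i y_i/(d_i\,\zeta\T y)$ at the active point, stationarity $d_i\zeta_i y_i\propto z_i$, and a direct check that row sum $=z_i/d_i=$ column sum. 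I verified this computation and it is correct; combined with essential uniqueness of balancing for the irreducible positive $\tilde G$, it gives a self-contained proof of the step the paper outsources, at the cost of needing a constraint-qualification/convexity remark the paper handles separately. One small repair: tightness of the spectral constraint can fail for \eqref{problem_formulation2} itself in the degenerate case $q^*=\bzero$ with slack constraint; it holds unconditionally for the relaxed problem in $d$ (no upper bound on $d$), which is the problem you actually balance, so the activity claim should be stated there.

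The genuine gap is your last step, which is exactly where the paper spends most of Appendix~\ref{appendix:A}. You write that you would ``exploit the bound $1+x/m^2\ge m$ \dots to control the entries of $G$ (equivalently of $\tilde G$ and its balancing solution) and thereby force $d_i\le1$,'' but as stated this is a restatement of the goal rather than an argument: the recovered $d_i$ is the reciprocal of a row sum of $(\diag(d^*))^{-1}(-B_0^{-1})\diag(d^*)$ (the condition $v^*\ge\bone$ applied to \eqref{optimal_sol_weight_bal}), and these row sums involve the unknown balancing scaling $d^*$, which you cannot bound directly from entrywise bounds on $G$ or $\tilde G$. The paper's resolution needs two ingredients your sketch omits: (i) since $-mB_0^{-1}=Q\ge\bzero$, each scaled row sum $\sum_j (d_j^*/d_i^*)Q_{ij}$ is lower-bounded by its \emph{diagonal} term $Q_{ii}$, which is invariant under the unknown similarity --- this scale-invariance observation is what makes any entrywise bound usable at all; and (ii) a lower bound on $Q_{ii}$ via the Neumann series \eqref{B_0_inverse} truncated at second order, where the cross term $\epsilon\beta^{\mathrm{s}}s_j(t_0)A_{jj}/m^2$ produced by multiplying the off-diagonal blocks $\epsilon I$ and $\beta^{\mathrm{s}}\diag(s(t_0))A$ is precisely the quantity $x$ in Assumption~\ref{assump1}, yielding $Q_{ii}\ge 1+x/m^2\ge m$ and hence $v^*\ge\bone$, i.e.\ $d\le\bone$. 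Without (i), Assumption~\ref{assump1} does not obviously control the balanced matrix, so your feasibility certificate would not go through as written; with (i) and (ii) supplied, your proof is complete and coincides with the paper's up to your self-contained KKT derivation of the balancing reduction.
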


\begin{proof}
See Appendix~\ref{appendix:A}. 
\end{proof}




Inspired from \cite{2021:Ma_Olshevsky_Arxiv}, we reduced problem \eqref{problem_formulation2} to a matrix balancing problem with a known polynomial complexity in terms of the number of unknown variables ($2n$). 
Note that problem \eqref{problem_formulation2} is potentially of higher-order complexity due to the eigenvalue condition in the fist constraint. 
While the matrix balancing approach necessitates calculating the inverse of a $2n \times 2n$ matrix $B_0$ (which has a cubic complexity), our proposed solution maintains a polynomial complexity with respect to the number of parameters ($2n$).

\noindent
\textbf{Connection of Constraints with the Epidemic Basic Reproduction Number.} 
In epidemiological analysis, significant emphasis is placed on determining the reproduction number $R_0$ 
In our scenario $R_0$ is of particular importance since it can indicate whether a disease outbreak will escalate into an epidemic or naturally dissipate. 
Unfortunately, calculating $R_0$ for a specific disease is a non trivial task (primarily due to the absence of comprehensive databases containing essential data such as infection rates and recovery rates). 
Motivated by this, we establish a relationship between the constraint \(\lambda_{\max} (M(t_0)) \leq -\alpha\) and the epidemic's reproduction number \(R_0\). This connection is crucial, as modifying our constraints can prevent an epidemic from escalating to a pandemic or spreading extensively within a population.

\begin{theorem}\label{reprod_theor_proof}
    The set $R_0 \leq 1$ is equivalent with $\lambda_{\max} \leq 0$
\end{theorem}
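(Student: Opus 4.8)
The plan is to recognize $R_0$ as the spectral radius of the next-generation matrix built from the infected subsystem, and then to invoke the classical sign-equivalence between that spectral radius and the spectral abscissa $\lambda_{\max}$ of $M(t_0,q)$. First I would write the infected-block matrix $M(t_0,q)$ of \eqref{matrix_for_convergence} as a difference $M(t_0,q) = F - V$ of a transmission matrix and a transition matrix,
\begin{equation*}
F = \begin{pmatrix} \beta^{\mathrm{a}}\diag(s(t_0))A & \beta^{\mathrm{s}}\diag(s(t_0))A \\ \bzero & \bzero \end{pmatrix}, \quad V = \begin{pmatrix} (\epsilon+r^{\mathrm{a}})I + \diag(q^{\mathrm{a}}) & \bzero \\ -\epsilon I & r^{\mathrm{s}}I+\diag(q^{\mathrm{s}}) \end{pmatrix},
\end{equation*}
where $F$ gathers the appearance of new infections and $V$ gathers the movement between and out of the infected compartments. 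A block-by-block comparison confirms $F - V = M(t_0,q)$, and I would define $R_0 := \rho(FV^{-1})$, the standard next-generation definition.

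Next I would record two structural facts. Since $s(t_0) > 0$, $A \ge \bzero$, and the transmission rates are nonnegative, we have $F \ge \bzero$. The matrix $V$ is a $Z$-matrix (nonpositive off-diagonal entries) that is block lower-triangular with strictly positive diagonal; hence it is a nonsingular $M$-matrix and $V^{-1} \ge \bzero$, so the next-generation matrix $N := V^{-1}F \ge \bzero$ is well defined and $R_0 = \rho(FV^{-1}) = \rho(N) \ge 0$.

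The core step is the sign equivalence. I would write $V - F = V(I - N)$ and use the $M$-matrix characterization: since $V-F$ is a $Z$-matrix, it is a nonsingular $M$-matrix if and only if $\rho(N) < 1$, i.e. exactly when $R_0 < 1$. On the other hand, a $Z$-matrix is a nonsingular $M$-matrix precisely when all of its eigenvalues have positive real part, i.e. when $\lambda_{\max} = s(F-V) = -\min_k \operatorname{Re}\lambda_k(V-F) < 0$; here $\lambda_{\max}$ is the dominant real eigenvalue guaranteed by the Perron–Frobenius statement of Lemma~\ref{perron_frob_3}, which applies because $M(t_0,q)$ is essentially nonnegative and, with $A$ strongly connected and $s(t_0)>0$, strongly connected. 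Chaining these equivalences yields $R_0 < 1 \iff \lambda_{\max} < 0$; the symmetric argument (equivalently, the standard next-generation-matrix theorem) gives $R_0 > 1 \iff \lambda_{\max} > 0$. Since $R_0 - 1$ and $\lambda_{\max}$ are real scalars, a trichotomy then forces the remaining case $R_0 = 1 \iff \lambda_{\max} = 0$, and combining all three gives $R_0 \le 1 \iff \lambda_{\max} \le 0$.

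The main obstacle I anticipate is the boundary case $R_0 = 1 \iff \lambda_{\max}=0$: the clean argument rests on nonsingular $M$-matrix theory, which only directly resolves the strict inequalities, so I would obtain equality either by the trichotomy above or, alternatively, through the singular $M$-matrix characterization ($V-F$ a singular $M$-matrix $\iff \rho(N) = 1$) together with a contradiction argument against $\rho(N)>1$. A secondary subtlety is that $FV^{-1}$ is reducible—$F$ has a zero block of rows—so $\rho(FV^{-1})$ should be extracted from the nontrivial upper block rather than from a naive Perron–Frobenius application to $FV^{-1}$ itself; I would sidestep this entirely by phrasing the whole argument through $V-F$ (which is strongly connected) rather than through the next-generation matrix.
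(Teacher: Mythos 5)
Your proof is correct and rests on the same skeleton as the paper's: the same transmission/transition split of $M(t_0,q)$ (the paper writes $M(t_0) = F + V$ with $V$ Metzler and Hurwitz, which is your $V$ with the sign absorbed), and the same next-generation definition $R_0 = \rho(FV^{-1})$ borrowed from \cite{smith2022convex}. Where you diverge is the engine behind the sign equivalence. The paper disposes of it in one line by appealing to ``the early results of Thm.~\ref{SIQR_unconstrained},'' i.e., the third bullet of Lemma~\ref{splitting_lemma}: since that lemma is stated with \emph{non-strict} stability notions (continuous-time: $\mathrm{Re}\,\lambda \leq 0$; discrete-time: $|\lambda| \leq 1$), it delivers $\rho(-V^{-1}F) \leq 1 \iff \lambda_{\max}(F+V) \leq 0$ in one shot, boundary case included, at the price of checking strong connectivity of both $F+V$ and $(-V)^{-1}F$ (which holds here since $A$ is strongly connected and $s(t_0) > \bzero$). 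You instead prove the equivalence from scratch via M-matrix/regular-splitting theory through the factorization $V - F = V(I-N)$, which buys self-containedness and neatly sidesteps the reducibility of $FV^{-1}$ that you correctly flag — but it costs you the boundary case $R_0 = 1$, since nonsingular M-matrix theory only resolves the strict inequalities. Your proposed repairs are valid: the trichotomy works once $R_0 > 1 \iff \lambda_{\max} > 0$ is secured (the standard next-generation-matrix theorem, or equivalently a limiting argument: $\rho(V^{-1}tF) < 1$ for $t < 1$ gives one direction by continuity, and replacing $V$ by $V + \epsilon I$ gives the other), as does the singular M-matrix characterization. Net assessment: same theorem-shape, different key lemma — the paper's route is shorter because Lemma~\ref{splitting_lemma} is already available in non-strict form, while yours is more elementary, more portable, and notably more explicit about the equality case, which the paper's proof glosses over.
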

\begin{proof} 
    See Appendix~\ref{appendix:B}. 
\end{proof}
Theorem~\ref{reprod_theor_proof} implies that imposing the condition \(\lambda_{\max} (M(t_0)) \leq -\alpha\) provides a set of solutions that enforces a bound on the disease reproduction number $R_0 \leq r$, for $r \in [0,1]$.

\noindent
\textbf{Convergence via Augmented Primal-Dual Gradient Dynamics for Optimal Quarantine Control.}
We now extend our previous approach to solving the optimal quarantine problem by applying the augmented primal-dual gradient dynamics (Aug-PDGD). This extension demonstrates semi-global exponential convergence to a KKT point of the problem. Below are the relevant propositions and theorems supporting this result.

\begin{prop}\label{convexity_prop}
Assuming the feasibility of \eqref{problem_formulation2}, the cost function, defined as \( f(q) = \sum_{i=1}^{n} \left( \frac{z_i^a}{1-q_i^{\mathrm{a}}} + \frac{z_i^s}{1-q_i^{\mathrm{s}}} \right) \), is continuously differentiable and strongly convex on the feasible domain \( \bzero \leq q < \bone \). Moreover, \( M(t_0, q) \) can be expressed as the sum of an essentially non-negative matrix and a diagonal matrix with the elements of the vector \(-q\) on its diagonal. Therefore, as demonstrated in \cite{cohen1981convexity}, the dominant eigenvalue of the matrix \( M(t_0, q) \) is convex. Defining \( g_1(q) = \lambda_{\max} (M(t_0, q)) + \alpha \), \( g_{(1+i)} = -q_i \), and \( g_{2n+1+i} = q_i - 1 \) for all \( i \in \{1, \dots, 2n\} \), \( g(q) \) is continuously differentiable and convex.
\end{prop}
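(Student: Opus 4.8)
The plan is to treat the proposition's three assertions—strong convexity of the objective $f$, the Metzler-plus-diagonal decomposition of $M(t_0,q)$, and convexity together with continuous differentiability of the constraint map $g$—separately, since each rests on a different ingredient. First I would dispatch the objective. Writing $f(q) = \sum_i \phi(q_i^{\mathrm{a}}; z_i^a) + \phi(q_i^{\mathrm{s}}; z_i^s)$ with $\phi(t;z) = z/(1-t)$, the Hessian $\nabla^2 f$ is diagonal with entries $2z/(1-t)^3$. On the feasible box $\bzero \le q < \bone$ each entry is bounded below by $2z>0$ (since $0 \le t < 1$ forces $(1-t)^3 \le 1$), so $\nabla^2 f \succeq 2\bigl(\min_i\{z_i^a,z_i^s\}\bigr) I$, giving strong convexity with an explicit modulus; smoothness is immediate because $\phi$ is $C^\infty$ away from $t=1$. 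This step is routine.

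Next I would establish the decomposition by direct block comparison. Setting $M_0$ to be $M(t_0,q)$ with the quarantine terms deleted from the two diagonal blocks, one checks that $M(t_0,q) = M_0 - \diag(q)$, where $q=(q^a,q^s)$. Inspecting the four blocks of $M_0$, every off-diagonal entry is one of $\beta^{\mathrm{a}} s_i a_{ij}$, $\beta^{\mathrm{s}} s_i a_{ij}$, or $\epsilon$, each nonnegative under $s(t_0)>0$ and $A\ge\bzero$, while the only negative entries ($-(\epsilon+r^{\mathrm{a}})$ and $-r^{\mathrm{s}}$) sit on the diagonal. Hence $M_0$ is essentially non-negative and $-\diag(q)$ is exactly the claimed diagonal matrix. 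Convexity of $\lambda_{\max}$ then follows from \cite{cohen1981convexity}: that result gives convexity of the spectral abscissa of $M_0 + \diag(d)$ in the diagonal vector $d$; taking $d=-q$, which is affine in $q$, and using preservation of convexity under affine precomposition, $\lambda_{\max}(M(t_0,q))$ and hence $g_1 = \lambda_{\max}(M(t_0,q))+\alpha$ are convex.

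Finally I would assemble $g$. The remaining components $g_{1+i}=-q_i$ and $g_{2n+1+i}=q_i-1$ are affine, hence convex and $C^\infty$, so only $g_1$ needs attention for differentiability—and here convexity alone does not deliver it. I would therefore invoke Lemma~\ref{perron_frob_3}: since $M(t_0,q)$ inherits the strong connectivity of $A$ through the $\epsilon I$ and $\beta^{\mathrm{s}}\diag(s(t_0))A$ cross-blocks and has nonnegative off-diagonal entries, its dominant eigenvalue is simple, which then lets me apply Theorem~\ref{deriv_max_eigenvalue_calculation} to make $\lambda_{\max}$ infinitely differentiable near each feasible point, yielding continuous differentiability of $g_1$. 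I expect the main obstacle to be exactly this differentiability/simplicity point: one must verify that $M(t_0,q)$ stays strongly connected—so that Perron–Frobenius applies and $\lambda_{\max}$ is genuinely simple rather than merely the eigenvalue of largest real part—uniformly over the whole box $\bzero\le q<\bone$, not just at a nominal $q$. By contrast, the Cohen convexity step is comparatively immediate once the Metzler structure is exhibited.
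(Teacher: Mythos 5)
Your proof is correct and takes essentially the same route as the paper, which likewise justifies the proposition via the decomposition $M(t_0,q) = M_0 - \diag(q)$ with $M_0$ essentially non-negative, the convexity result of \cite{cohen1981convexity} composed with the affine map $q \mapsto -q$, and (as the paper does in Lemma~\ref{L_smooth_function} and Theorem~\ref{lipschitz_constraints}) Lemma~\ref{perron_frob_3} together with Theorem~\ref{deriv_max_eigenvalue_calculation} for simplicity and smoothness of $\lambda_{\max}$; your explicit Hessian bound $2z/(1-t)^3 \geq 2z$ merely fills in the strong-convexity claim the paper asserts without computation. The one obstacle you flag—uniform simplicity of $\lambda_{\max}$ over the whole box $\bzero \leq q < \bone$—resolves immediately, since $q$ enters $M(t_0,q)$ only on the diagonal, so the off-diagonal zero pattern, and hence strong connectivity, is independent of $q$.
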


\begin{theorem}\label{lipschitz_constraints}
Given \( g_1(q) = \lambda_{\max} (M(t_0, q)) + \alpha \), \eqref{df} and \eqref{matrix_for_convergence} imply that
\[ \nabla g_1(q) = \frac{v_{\max} \circ u_{\max}}{v_{\max}^T u_{\max}} .\]
Recall that \( v_{\max}^T \) and \( u_{\max} \) represent the left and right eigenvectors associated with \( \lambda_{\max} (M(t_0, q)) \). Hence, \( \|\nabla g_1(q)\| \leq L_{g,1} \) (see Appendix~\ref{appendix:C} for proof), and \( g_1(q) \) is also \( M_{g,1} \)-smooth (see Lemma~\ref{L_smooth_function}). Additionally, for all \( i \geq 2 \), the gradients \( \nabla g_i \) are Lipschitz continuous and have bounded norms, which is evident from their definitions.
\end{theorem}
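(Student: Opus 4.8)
The plan is to verify the four assertions of the theorem in sequence, treating the closed-form gradient of $g_1$ as the one substantive computation and deriving the boundedness, smoothness, and linear-constraint claims from it together with results already established.

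I would begin with the gradient identity by specializing the eigenvalue-derivative formula \eqref{df} to the parameter vector $q$. The key structural observation is that in \eqref{matrix_for_convergence} the vector $q = (q^{\mathrm{a}}, q^{\mathrm{s}})$ enters $M(t_0,q)$ \emph{only} through the diagonal blocks $-\diag(q^{\mathrm{a}})$ (inside $E$) and $-\diag(q^{\mathrm{s}})$. Consequently, if we index the $2n$ entries of $q$ by $\ell$, the derivative $\partial M(t_0,q)/\partial q_\ell$ is the single-entry matrix $-e_\ell e_\ell^{T}$, with a $-1$ in position $(\ell,\ell)$ and zeros elsewhere. Substituting this into \eqref{df} collapses the quadratic form $v_{\max}^{T}(\partial M/\partial q_\ell)u_{\max}$ to the single product $-(v_{\max})_\ell (u_{\max})_\ell$; stacking the $2n$ partials then assembles exactly the Hadamard product $v_{\max}\circ u_{\max}$ in the numerator over the normalizing scalar $v_{\max}^{T}u_{\max}$, reproducing the stated expression for $\nabla g_1(q)$.

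For the bound $\|\nabla g_1(q)\|\le L_{g,1}$ (the content deferred to Appendix~\ref{appendix:C}) I would combine Perron--Frobenius positivity with a compactness argument. By Lemma~\ref{perron_frob_3}, $M(t_0,q)$ --- strongly connected with nonnegative off-diagonal entries on the feasible box --- has a simple dominant eigenvalue with strictly positive left and right eigenvectors $v_{\max},u_{\max}$; fixing the normalization $\|u_{\max}\|=\|v_{\max}\|=1$ makes the numerator $v_{\max}\circ u_{\max}$ uniformly bounded. The denominator $v_{\max}^{T}u_{\max}$ is a sum of products of strictly positive numbers, hence strictly positive for each fixed $q$. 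To upgrade this to a uniform positive lower bound I would reuse the compactness reasoning from the proof of Lemma~\ref{L_smooth_function}: as $q$ ranges over $\{\bzero\le q\le\bone\}$ the matrices $M(t_0,q)$ form a compact family, the normalized positive eigenvectors vary continuously with $q$ (Theorem~\ref{deriv_max_eigenvalue_calculation} and Lemma~\ref{cont_eigval_M}), and so $v_{\max}^{T}u_{\max}$ attains a strictly positive minimum. Dividing a bounded numerator by a denominator bounded away from zero yields the claimed $L_{g,1}$.

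The remaining two claims are short. Smoothness of $g_1$ is immediate from Lemma~\ref{L_smooth_function}, which already shows the dominant-eigenvalue map has a Lipschitz-continuous gradient on the relevant compact parameter set; adding the constant $\alpha$ changes nothing, giving the constant $M_{g,1}$. For $i\ge 2$ the constraints $g_{1+i}(q)=-q_i$ and $g_{2n+1+i}(q)=q_i-1$ are affine, so their gradients are the constant vectors $-e_i$ and $e_i$, trivially Lipschitz with constant zero and of norm one. The main obstacle throughout is the uniform lower bound on $v_{\max}^{T}u_{\max}$: pointwise positivity is free from Perron--Frobenius, but preventing it from degenerating toward zero as $q$ varies is precisely where continuity of the eigenvectors and compactness of the feasible box must be invoked, and this is the crux of the Appendix~\ref{appendix:C} estimate, while every other assertion reduces to a one-line appeal to an earlier result.
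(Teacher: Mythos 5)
Your proposal follows essentially the same route as the paper: the gradient identity comes from differentiating $M(t_0,q)$ through its diagonal blocks and substituting into \eqref{df}; the bound $\|\nabla g_1(q)\|\le L_{g,1}$ is argued exactly as in Appendix~\ref{appendix:C} (unit-normalize the eigenvectors so the numerator is bounded, apply Lemma~\ref{perron_frob_3} to $M(t_0,q)$ and $M(t_0,q)^T$ to get strict positivity of $u_{\max}$ and $v_{\max}$, then use compactness to bound $v_{\max}^T u_{\max}$ away from zero); smoothness is delegated to Lemma~\ref{L_smooth_function}; and the affine constraints $g_i$, $i\ge 2$, are dispatched trivially. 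Your phrasing of the denominator estimate --- continuity of the normalized eigenvectors over the compact box $\bzero \le q \le \bone$ plus the extreme-value theorem --- is a slightly more constructive rendering of the paper's compactness-by-contradiction step, but it is the same argument.

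One point needs correction: your computation is right but the conclusion you draw from it is not. Since $q_\ell$ enters $M(t_0,q)$ only through $-\diag(q^{\mathrm{a}})$ and $-\diag(q^{\mathrm{s}})$, you correctly obtain $\partial M/\partial q_\ell = -e_\ell e_\ell^T$ and hence
\[
\frac{\partial g_1}{\partial q_\ell} = \frac{-(v_{\max})_\ell\,(u_{\max})_\ell}{v_{\max}^T u_{\max}},
\qquad\text{so}\qquad
\nabla g_1(q) = -\,\frac{v_{\max}\circ u_{\max}}{v_{\max}^T u_{\max}},
\]
an elementwise \emph{negative} vector --- as it must be, since raising quarantine rates lowers the dominant eigenvalue. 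Your write-up records the $-(v_{\max})_\ell(u_{\max})_\ell$ per coordinate and then asserts that stacking "reproduces the stated expression," silently dropping the sign; in effect you have reproduced a sign typo present in the theorem as printed (and echoed in Appendix~\ref{appendix:E}, where this gradient is called elementwise positive) rather than flagging the contradiction with your own formula. Nothing downstream is damaged --- the norm bound, the $M_{g,1}$-smoothness, and the claims for $i\ge 2$ depend only on the magnitude of the gradient --- but a correct proof should either carry the minus sign through or explicitly note the discrepancy with the statement.
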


To analyze the augmented primal-dual gradient dynamics, we introduce the augmented Lagrangian of \eqref{problem_formulation2} as~\cite{tang2020semi}
\begin{equation}
    L_{\rho}(q, \lambda) = f(q) + \Theta_{\rho} (q, \lambda), \qquad \lambda \geq \bzero,
    \label{Aug-PDGD}
\end{equation}
where
\[
\Theta_{\rho} (q, \lambda) := \sum_{i=1}^{4n+1} \frac{[\rho g_i(q) + \lambda_i]_{+}^2 - \lambda_i^{2}}{2\rho}.
\]
The augmented primal-dual gradient dynamics are given by
\begin{subequations}
\begin{align}
    &\dot{q}(t) = -\nabla f(q(t)) - \sum_{i=1}^{4n+1}[\rho g_i(q(t)) + \lambda_i(t)]_{+} \nabla g_i(q(t)) \label{q_dot}\\
    &\dot{\lambda}(t) = \sum_{i=1}^{4n+1} \frac{[\rho g_i(q) + \lambda_i(t)]_{+} - \lambda_i(t)}{\rho} e_i.
\end{align}
\label{Aug-PDGD_diff}
\end{subequations}
\begin{prop}[\hspace{-.11cm} \cite{tang2020semi}] \label{trajectory_set}
Denoting \( y(t) = (q(t), \lambda(t)) \), \( t \geq 0 \) as a differentiable trajectory satisfying~\eqref{Aug-PDGD_diff}. Then, for a KKT point of~\eqref{Aug-PDGD}, \( y^* = (q^*, \lambda^*) \), we have \( \|y(t) - y^*\| \leq \| y(0) - y^*\| \), for all \( t \geq 0 \).
\end{prop}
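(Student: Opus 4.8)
The plan is to exhibit a Lyapunov function whose value is non-increasing along the flow, namely the squared distance to the KKT point $V(t) = \tfrac{1}{2}\|y(t) - y^*\|^2$. The key structural observation is that \eqref{Aug-PDGD_diff} is exactly the gradient descent--ascent flow of the augmented Lagrangian $L_\rho$ in \eqref{Aug-PDGD}: one checks directly that $\dot q = -\nabla_q L_\rho(q,\lambda)$ and $\dot\lambda = \nabla_\lambda L_\rho(q,\lambda)$, since $\partial_{\lambda_i}\Theta_\rho = ([\rho g_i(q)+\lambda_i]_+ - \lambda_i)/\rho$ and $\partial_q \Theta_\rho = \sum_i [\rho g_i(q)+\lambda_i]_+ \nabla g_i(q)$, matching \eqref{q_dot} and the $\dot\lambda$ equation. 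The whole argument then reduces to the classical fact that a descent--ascent flow on a convex--concave saddle function is non-expansive toward any of its saddle points.

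First I would establish that $L_\rho(q,\lambda)$ is convex in $q$ and concave in $\lambda$. Convexity in $q$ follows from Proposition~\ref{convexity_prop}: $f$ is convex and each $g_i$ is convex, so each $q \mapsto [\rho g_i(q)+\lambda_i]_+$ is a nondecreasing convex scalar function composed with a convex function, hence convex and non-negative; squaring a non-negative convex function preserves convexity, so each penalty term and therefore $L_\rho(\cdot,\lambda)$ is convex. Concavity in $\lambda$ I would verify termwise: for fixed $q$, the map $\lambda_i \mapsto ([\rho g_i(q)+\lambda_i]_+^2 - \lambda_i^2)/(2\rho)$ is affine on the region $\rho g_i(q)+\lambda_i > 0$ and equal to $-\lambda_i^2/(2\rho)$ otherwise, hence concave.

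Second, I would verify that the KKT point $y^* = (q^*,\lambda^*)$ is a saddle point of $L_\rho$, i.e. $L_\rho(q^*,\lambda) \le L_\rho(q^*,\lambda^*) \le L_\rho(q,\lambda^*)$ for all feasible $q$ and all $\lambda \geq \bzero$. The crucial computation is that complementary slackness collapses the clamp: if $\lambda_i^* > 0$ then $g_i(q^*)=0$, while if $\lambda_i^*=0$ then $\rho g_i(q^*)+\lambda_i^* \le 0$; in both cases $[\rho g_i(q^*)+\lambda_i^*]_+ = \lambda_i^*$. Substituting this into $\nabla_q L_\rho(q^*,\lambda^*)$ recovers the KKT stationarity $\nabla f(q^*)+\sum_i \lambda_i^*\nabla g_i(q^*)=0$, and $\nabla_\lambda L_\rho(q^*,\lambda^*)=0$ is immediate. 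Since $L_\rho$ is convex--concave by the previous step, a stationary point is a saddle point, giving the two saddle inequalities.

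Finally, differentiating $V$ and substituting the flow gives $\dot V = \langle q-q^*,\, -\nabla_q L_\rho\rangle + \langle \lambda-\lambda^*,\, \nabla_\lambda L_\rho\rangle$. The first-order inequality for the convex function $L_\rho(\cdot,\lambda)$ bounds the first inner product by $L_\rho(q^*,\lambda) - L_\rho(q,\lambda)$, and the first-order inequality for the concave function $L_\rho(q,\cdot)$ bounds the second by $L_\rho(q,\lambda) - L_\rho(q,\lambda^*)$; adding them the $L_\rho(q,\lambda)$ terms cancel and $\dot V \le L_\rho(q^*,\lambda) - L_\rho(q,\lambda^*) \le 0$ by the saddle inequalities. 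Integrating $\dot V \le 0$ yields $\|y(t)-y^*\| \le \|y(0)-y^*\|$ for all $t\ge 0$. I expect the main obstacle to be the saddle-point verification of the second step: because the clamp $[\cdot]_+$ makes $L_\rho$ only piecewise smooth, I must argue carefully that complementary slackness makes the clamp either inactive or exactly saturated at $y^*$, so that stationarity of $L_\rho$ there genuinely coincides with the KKT conditions of \eqref{problem_formulation2}.
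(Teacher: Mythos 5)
Your proof is correct and matches the paper's route in spirit: the paper states Proposition~\ref{trajectory_set} without proof, importing it from \cite{tang2020semi}, whose argument is exactly yours---\eqref{Aug-PDGD_diff} is the descent--ascent flow of the convex--concave augmented Lagrangian \eqref{Aug-PDGD}, complementary slackness collapses the clamp so the KKT point is a saddle, and the squared distance then serves as a non-increasing Lyapunov function. The only spot to tighten is concavity in $\lambda$, where piecewise concavity alone is not enough; it holds because the two branches glue $C^1$ (the derivative equals $g_i(q)$ at $\lambda_i = -\rho g_i(q)$ and is non-increasing), so your conclusion stands.
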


\begin{theorem}\label{f_lipschitz}
The gradient of the cost function \( f(q) \) is Lipschitz continuous over the region traversed by the primal gradient dynamics described by~\eqref{Aug-PDGD_diff}.
\end{theorem}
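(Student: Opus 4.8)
The plan is to reduce the assertion to a statement about a compact set. The cost has the explicit form $f(q) = \sum_{i=1}^{n} \bigl(z_i^a/(1-q_i^{\mathrm{a}}) + z_i^s/(1-q_i^{\mathrm{s}})\bigr)$, which is $C^\infty$ on the open domain $\{q : q < \bone\}$, with diagonal Hessian entries $2 z_i^a/(1-q_i^{\mathrm{a}})^3$ and $2 z_i^s/(1-q_i^{\mathrm{s}})^3$. These blow up as any coordinate approaches $1$, so $\nabla f$ is \emph{not} globally Lipschitz on the feasible region $\bzero \leq q < \bone$; the whole content of the theorem is therefore to confine the trajectory uniformly away from the hyperplanes $q_i = 1$, after which smoothness of $f$ on a compact set yields the result for free.

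First I would invoke Proposition~\ref{trajectory_set}: along any trajectory of \eqref{Aug-PDGD_diff} we have $\|y(t) - y^*\| \leq \|y(0) - y^*\|$, so both $q(t)$ and $\lambda(t)$ remain in a fixed bounded ball $\mathcal{B}$. This bounds the traversed region, but it does \emph{not} by itself keep $q(t)$ off the boundary, since a Euclidean ball centered at the interior KKT point $q^*$ can still cross a hyperplane $q_i = 1$. This gap is exactly the main obstacle.

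The key step is to produce an invariant margin. Since $M(t_0,q)$ is affine in $q$, the constraint $g_1$ and its gradient extend continuously to all of $\mathcal{B}$, and $\|\nabla g_1\| \leq L_{g,1}$ by Theorem~\ref{lipschitz_constraints}; together with the affine constraints $g_i$ ($i \geq 2$) and the bounded multipliers $\lambda(t)$, the penalty sum $\sum_{i}[\rho g_i(q)+\lambda_i]_{+}\nabla g_i(q)$ in \eqref{q_dot} is bounded in norm by a constant $P$ on $\mathcal{B}$ (none of these terms involves $f$, so boundedness holds even up to $q_i = 1$). By contrast the $i$-th component of $-\nabla f(q)$ equals $-z_i^a/(1-q_i^{\mathrm{a}})^2$ (and analogously for the symptomatic block), which tends to $-\infty$ as $q_i \to 1^-$. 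Assuming positive costs $z_i > 0$ (the degenerate coordinates $z_i = 0$ being handled instead by the hard penalty from $g_{2n+1+i}(q)=q_i-1$), choose $\delta > 0$ small enough that $z_i^a/\delta^2 > P$ and $z_i^s/\delta^2 > P$ for every $i$; then $\dot q_i < 0$ whenever $q_i \geq 1-\delta$, so the box $\{q : q \leq (1-\delta)\bone\}$ is forward invariant. Shrinking $\delta$ further so that also $q(0) \leq (1-\delta)\bone$, we conclude $q(t) \leq (1-\delta)\bone$ for all $t \geq 0$, i.e. the traversed region lies in the compact convex box $\mathcal{K} = \{q : \bzero \leq q \leq (1-\delta)\bone\} \subset \{q : q < \bone\}$.

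Finally I would close the argument by smoothness. On the compact set $\mathcal{K}$ the continuous map $q \mapsto \nabla^2 f(q)$ attains a finite bound $\|\nabla^2 f(q)\|_2 \leq L_f$; integrating $\nabla^2 f$ along the segment joining any two points of the convex set $\mathcal{K}$ (which stays in $\mathcal{K}$, where $f$ is smooth, by Proposition~\ref{convexity_prop}) yields $\|\nabla f(q_1) - \nabla f(q_2)\| \leq L_f \|q_1 - q_2\|$. Since the region traversed by the primal dynamics is contained in $\mathcal{K}$, $\nabla f$ is Lipschitz continuous there, which is the claim. The delicate point remains the third paragraph: norm-boundedness from Proposition~\ref{trajectory_set} is insufficient on its own, and one must exploit the barrier blow-up of $\nabla f$ against the uniform bound on the penalty term to manufacture the margin $\delta$.
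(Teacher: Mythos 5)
Your proposal is correct and follows essentially the same route as the paper's Appendix~\ref{appendix:D}: bound the penalty term of \eqref{q_dot} using the compactness of $\lambda(t)$ from Proposition~\ref{trajectory_set} and the gradient bounds of Theorem~\ref{lipschitz_constraints}, exploit the blow-up of $\nabla f$ as $q_i \to 1$ to obtain a forward-invariant box $\{q : \bzero \leq q \leq (1-\delta)\bone\}$, and conclude Lipschitz continuity of $\nabla f$ on that compact set. You merely make the paper's sketch more explicit (the quantitative choice of $\delta$ against the penalty bound $P$, the adjustment for $q(0)$, the Hessian-integration step, and the harmless degenerate case $z_i = 0$, where $\nabla f$ does not depend on $q_i$ at all), which is a welcome tightening rather than a different argument.
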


\begin{proof}
See Appendix~\ref{appendix:D}.
\end{proof}

\begin{prop}\label{regularity_prop}
Let the active set at the local optimal solution \( q^* \) be denoted by \( \mathcal{I} := \{ i : g_i(q^*) = 0 \} \). Therefore, \( q^* \) is a regular local minimum, i.e., \( \nabla g_i(q^*) \) for \( i \in \mathcal{I} \) are linearly independent.
\end{prop}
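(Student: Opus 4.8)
The plan is to exploit the very explicit structure of the constraint gradients at $q^*$ and reduce regularity to a statement about coordinate vectors plus one strictly positive vector. First I would determine which constraints can possibly be active. Because the cost $f(q)=\sum_i\big(z_i^a/(1-q_i^{\mathrm a})+z_i^s/(1-q_i^{\mathrm s})\big)$ diverges as any coordinate of $q$ tends to $1$, any minimizer satisfies $q^*<\bone$ strictly; hence none of the upper-bound constraints $g_{2n+1+i}(q)=q_i-1$ lies in $\mathcal{I}$. The active set therefore consists of at most the eigenvalue constraint $g_1$ together with the lower-bound constraints $g_{1+i}(q)=-q_i$ for $i\in S:=\{i:q_i^*=0\}$, whose gradients are $\nabla g_1(q^*)=(v_{\max}\circ u_{\max})/(v_{\max}^T u_{\max})$ from Theorem~\ref{lipschitz_constraints} and $\nabla g_{1+i}(q^*)=-e_i$.

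The decisive structural fact is the sign pattern of $\nabla g_1$. Since $A$ is strongly connected and $s(t_0)>0$, the matrix $M(t_0,q^*)$ is strongly connected with nonnegative off-diagonal entries, so Lemma~\ref{perron_frob_3} guarantees that both $u_{\max}$ and $v_{\max}$ are real and strictly positive; consequently every one of the $2n$ components of $\nabla g_1(q^*)$ is strictly positive. I would then prove independence by contradiction: from a relation $c_0\nabla g_1(q^*)+\sum_{i\in S}c_i(-e_i)=0$, independence of the distinct coordinate vectors $\{e_i:i\in S\}$ means the only threat is $c_0\neq0$, which would place $\nabla g_1(q^*)$ in $\operatorname{span}\{e_i:i\in S\}$. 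Choosing any index $j\notin S$ and reading the $j$-th coordinate yields $c_0(\nabla g_1)_j=0$, and since $(\nabla g_1)_j>0$ we get $c_0=0$ and then all $c_i=0$.

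The step requiring genuine care — rather than routine checking — is ensuring that a free index $j\notin S$ exists, i.e. ruling out the degenerate configuration $q^*=\bzero$ with $g_1$ simultaneously active, where $S=\{1,\dots,2n\}$ and one faces $2n+1$ vectors in $\mathbb{R}^{2n}$ that are necessarily dependent. I would dispose of this using the control regime that makes the problem substantive. By monotonicity of the dominant eigenvalue of the Metzler matrix $M(t_0,q)$ in $q$, if $\lambda_{\max}(M(t_0,\bzero))<-\alpha$ then $q^*=\bzero$ but $g_1$ is strictly inactive, so $\mathcal{I}\subseteq\{-e_i\}$ is trivially independent; while in the interesting case where quarantine is actually needed, $\lambda_{\max}(M(t_0,\bzero))>-\alpha$ so the origin is infeasible, forcing $q^*\neq\bzero$ and hence $S\subsetneq\{1,\dots,2n\}$, exactly what the coordinate argument requires. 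Only the knife-edge equality $\lambda_{\max}(M(t_0,\bzero))=-\alpha$ is genuinely excluded, and I would note this is non-generic in $\alpha$. Combining the cases yields the linear independence of $\{\nabla g_i(q^*)\}_{i\in\mathcal{I}}$ and hence regularity of $q^*$.
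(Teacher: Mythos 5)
Your proof is correct and follows essentially the same route as the paper's: both rule out the upper-bound constraints via divergence of the cost as $q_i \to 1$, use Perron--Frobenius positivity of $\nabla g_1(q^*) = (v_{\max}\circ u_{\max})/(v_{\max}^T u_{\max})$ to obtain linear independence from any \emph{proper} subset of the coordinate gradients $-e_i$, and dispose of the degenerate configuration $q^* = \bzero$ with $g_1$ active by appealing to the outbreak setting. If anything, your case split via monotonicity of $\lambda_{\max}(M(t_0,q))$ in $q$ is more explicit than the paper's argument, which simply asserts that $g_1$ cannot be active at $q=\bzero$ ``for our model of the outbreak''---the knife-edge case $\lambda_{\max}(M(t_0,\bzero)) = -\alpha$ that you flag as non-generic in $\alpha$ is exactly what that assertion implicitly excludes.
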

\begin{proof}
    See Appendix~\ref{appendix:E}.
\end{proof}

\begin{corollary}\label{stability}
Under Propositions~\ref{convexity_prop} and \ref{regularity_prop}, and Theorems~\ref{lipschitz_constraints} and \ref{f_lipschitz}, the application of augmented primal-dual gradient dynamics (Aug-PDGD) to \eqref{problem_formulation2} achieves semi-global exponential stability. Specifically, the KKT point \( y^* = (q^*, \lambda^*) \) is a semi-globally exponentially stable equilibrium of the Aug-PDGD. \end{corollary}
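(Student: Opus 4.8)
The plan is to prove Corollary~\ref{stability} by verifying that every structural hypothesis of the main semi-global exponential stability theorem for augmented primal-dual gradient dynamics in~\cite{tang2020semi} is met by problem~\eqref{problem_formulation2}, and then invoking that theorem directly. That result guarantees that the KKT point $y^* = (q^*, \lambda^*)$ of~\eqref{Aug-PDGD} is a semi-globally exponentially stable equilibrium of the dynamics~\eqref{Aug-PDGD_diff} provided four ingredients hold on the region actually traversed by the trajectory: (i) the objective $f$ is strongly convex with a Lipschitz-continuous gradient; (ii) each constraint $g_i$ is convex, continuously differentiable, and smooth with a bounded, Lipschitz gradient; (iii) a regularity (LICQ) condition holds at $q^*$; and (iv) the trajectory remains in a bounded, forward-invariant set. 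Note that (i) and (iii) together make the KKT point unique, which is what lets us speak of a single equilibrium $y^*$.

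First I would assemble the already-established pieces against this checklist. Proposition~\ref{convexity_prop} supplies the strong convexity and continuous differentiability of $f$, together with convexity and continuous differentiability of every $g_i$ (the convexity of $g_1$ coming from the convexity of $\lambda_{\max}(M(t_0,q))$ via~\cite{cohen1981convexity}). Theorem~\ref{lipschitz_constraints} gives the boundedness of $\nabla g_1$, its $M_{g,1}$-smoothness, and the Lipschitz continuity and bounded norms of $\nabla g_i$ for $i \geq 2$, while Theorem~\ref{f_lipschitz} provides the Lipschitz continuity of $\nabla f$ on the traversed region. Proposition~\ref{regularity_prop} delivers the regularity condition. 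Finally, Proposition~\ref{trajectory_set} shows $\|y(t) - y^*\| \leq \|y(0) - y^*\|$ for all $t \geq 0$, so the trajectory is confined to the closed ball of radius $\|y(0) - y^*\|$ about $y^*$, which furnishes the forward-invariant bounded set of (iv).

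The key point that binds these together — and the reason the conclusion is semi-global rather than global — is that $f$ has a barrier-like singularity as $q_i \uparrow 1$, so $\nabla f$ is not globally Lipschitz and the smoothness and strong-convexity moduli are finite only on compact subsets of $\{\bzero \le q < \bone\}$. I would therefore argue that the forward-invariant ball of Proposition~\ref{trajectory_set}, intersected with the feasible domain, is a compact set on which $f$ and all the $g_i$ enjoy the uniform constants required by~\cite{tang2020semi}; the resulting exponential rate and region of attraction then depend on $\|y(0) - y^*\|$ through these constants, which is precisely the semi-global statement. I expect the main obstacle to be showing that the trajectory never approaches the singular face $q = \bone$: one must combine the blow-up of $f$ near that boundary with the bound of Proposition~\ref{trajectory_set} to conclude that the traversed set stays bounded away from $q = \bone$ (in particular that no boundary is reached in finite time), so that the Lipschitz constant of $\nabla f$ and the strong-convexity modulus remain strictly positive and finite there. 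With that compactness secured, all hypotheses of the theorem in~\cite{tang2020semi} hold on the relevant region, and its conclusion yields the claimed semi-global exponential stability of $y^*$.
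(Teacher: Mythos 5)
Your proposal takes essentially the same route as the paper: the corollary is proved exactly as you outline, by checking the hypotheses of the semi-global exponential stability result of \cite{tang2020semi} against Propositions~\ref{convexity_prop} and \ref{regularity_prop} and Theorems~\ref{lipschitz_constraints} and \ref{f_lipschitz} (with Proposition~\ref{trajectory_set} confining the trajectory), including the same caveat that Lipschitz continuity of \(\nabla f\) is only needed on the region invariant under the primal dynamics. The ``main obstacle'' you identify --- keeping the trajectory bounded away from the singular face \(q = \bone\) via the blow-up of \(\nabla_i f\) forcing \(\dot{q}_i < 0\) near it --- is precisely the content of Theorem~\ref{f_lipschitz} (Appendix~\ref{appendix:D}), which the paper proves separately and the corollary simply cites, so your plan is correct and aligned with the paper's argument.
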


This result directly follows immediately from Theorem \ref{f_lipschitz} and Proposition \ref{regularity_prop} and the results of ~\cite{tang2020semi}, which examines the stability of Aug-PDGD for smooth convex optimization problems with general convex and nonlinear inequality constraints. Consequently, the distance to the optimal solution decays exponentially from any initial point, although the convergence rate may depend on the initial distance to a stationary point.

It should be noted that~\cite{tang2020semi} assumes the Lipschitz continuity of \(\nabla f(q)\). However, as stated in Theorem~\ref{f_lipschitz}, what is actually required is the Lipschitz continuity of \(\nabla f(q)\) over the region invariant under the primal gradient dynamics.

\section{Simulation Results}\label{sec:simulation_res}


In this section we present simulation results for our optimization strategies. 
Specifically, for both problems \textbf{P1} and \textbf{P2} we focus on the scenario of a $14$ node network that represents the counties of the state of Massachusetts. 
We aim to demonstrate the advantages of our approaches and emphasize their applicability in real world scenarios. 

\subsection{Model Parameters}\label{mod_par_plots}

\noindent
\textbf{Massachusetts 14 Node Network.}  
The network comprises 14 nodes, each representing a county in Massachusetts. The population of each county, $N_i$ for $i \in \{1, ..., 14\}$, is based on the 2020 Census data \cite{censusdata}. Travel rates $\tau$ for the infection flow matrix $A$ are derived from the Human Mobility Flow dataset \cite{kang2020multiscale} using the methodology in Section~\ref{Travel_Matrix}. We set $t_i = \frac{1}{3}$ for all $i$, assuming people spend one-third of their time outside.

\noindent
\textbf{Epidemic and Economic Parameters.} 
GDP data from the Bureau of Economic Analysis \cite{beagdpdata} is used to assess economic impact on counties. Economic loss ($z_i ^\mathrm{a}$ and $z_i ^\mathrm{s}$) is calculated using $g_i$, a county's GDP, as $z_i ^\mathrm{a} = z_i ^\mathrm{s} = g_i/g_{\max}$, where $g_{\max}$ is the highest GDP among all counties.

COVID-19 spread is modeled using parameters from \cite{birge2022controlling, 2020:Giordano_Colaneri_855_860}. The recovery rate $\gamma$ is set uniformly for all disease states ($r^a = r^s = r^q = \gamma = 0.2$). The rate of developing symptoms $\epsilon$ is 0.32. Transmission rates $\beta^a$ and $\beta^s$ follow \cite{2021:Ma_Olshevsky_Arxiv} with $\beta^a = \eta \beta^s$ and $\eta = 0.6754$ \cite{2020:Giordano_Colaneri_855_860}. $\beta^s$ is adjusted to match the observed initial growth rate, and $\beta^a$ is calculated based on $\eta$. Adjustments to $\beta$ ensure the infection flow matrix $A$ meets the required growth rate.

\noindent
\textbf{Initial Rates.} 
Initial susceptible rates $s(t_0)$ are set using county-level data on cumulative cases and deaths \cite{mitch_smith}, adjusted for unreported cases by dividing cumulative cases by 0.14 \cite{birge2022controlling, hortaccsu2021estimating}. The initial susceptible rate of node $i$ is $1 - \frac{c_i(t_0)}{0.14 N_i}$, where $c_i(t_0)$ is the cumulative cases for node $i$. Initial state $t_0$ is set to April 1, 2020.

Initial recovered cases are calculated by multiplying the total recovered ratio of US cumulative cases on April 1, 2020 $\frac{8878}{215215}$ \cite{2021:Ma_Olshevsky_Arxiv} by each node's cumulative infected population and adding deaths. Active cases, both asymptomatic and symptomatic, are calculated as the remainder of cumulative cases. We assume 14\% are symptomatic and 86\% are asymptomatic, with all rates adjusted by 0.14 to account for reporting inaccuracies.

\subsection{Massachusetts County-Level $14$ Node Network}\label{Mass_network_plots}

\noindent
\textbf{Optimizing Travel Rates.} 
We now validate the correctness of our optimal solution for problem \eqref{problem_formulation_P1} over the $14$ node network that represents the state of Massachusetts. 
As we mentioned in Section~\ref{mod_par_plots}, for constructing the travel rate matrix $\tau$ and the infection flow matrix $A$, we utilized the Human Mobility Flow dataset \cite{kang2020multiscale}. 

In Fig.~\ref{fig:f(tau)_MA_P1} we observe the value of $f(\tau) = \lambda_{\max} \left( M(t_0, \tau) \right)$ (cf. \eqref{problem_formulation_P1}) using the aforementioned PGD algorithm. For every budget $b$, we reach the local optimal point $\tau^*$ after a sufficient number of iterations. As $b$ increases, the optimal travel rates move away from the uncontrolled values while minimizing $f(\tau)$. So, as depicted in Fig.~\ref{fig:f(tau)_MA_P1}, we obtain lower $\lambda_{\max} (M(t_0, \tau^*))$ for larger values of $b$. 
Specifically, for $b > 20$, we have that $f(\tau^*)$ becomes less than $0$ making the epidemic network converge faster to a disease-free equilibrium. Note that the degree of restrictions $b$ depends primarily on the travel rates within the network. Therefore, the customization for a proper choice in each specific network is essential. 
\begin{figure}[t]
    \centering
    \includegraphics[width=0.35\textwidth]{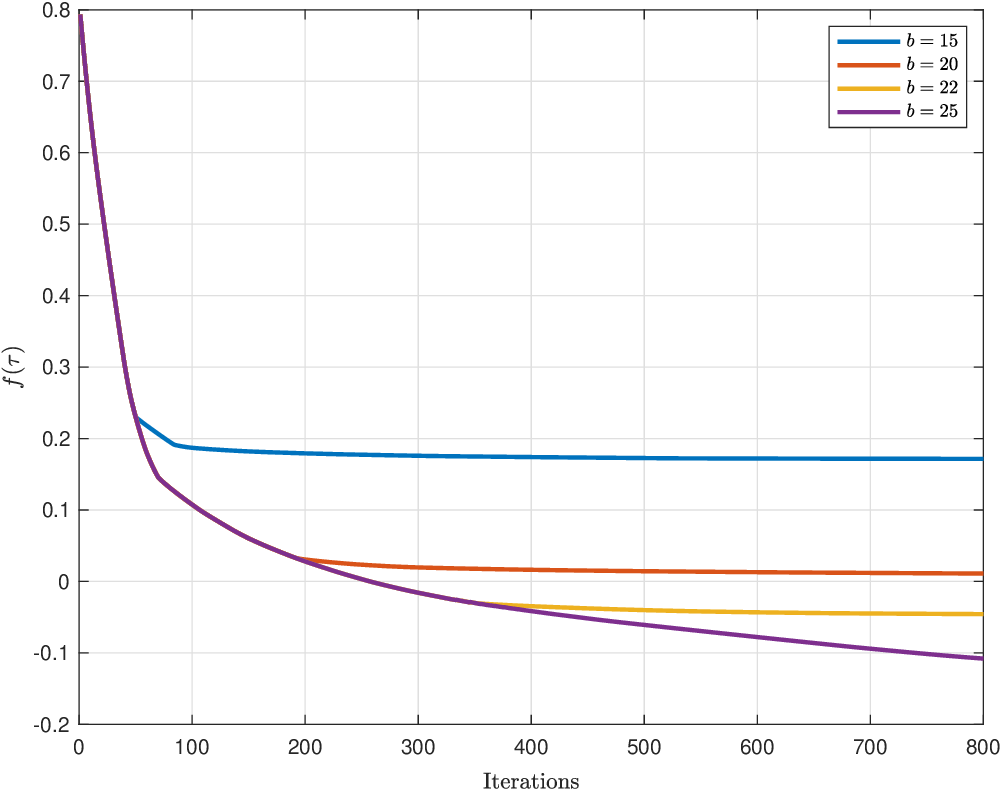}
    \caption{Optimal values of $f(\tau)$ in problem \eqref{problem_formulation_P1} for different budget parameters $b$ after changing the travel rates $\tau$ via \eqref{PGD_iteration}.} 
    \label{fig:f(tau)_MA_P1}
\end{figure}

In Fig.~\ref{fig:Cumulative_MA_P1} we observe the cumulative number of infected and recovered cases for the optimal travel rates obtained from PGD algorithm in Fig.~\ref{fig:f(tau)_MA_P1} within Massachusetts counties.
As expected, the optimal travel rates associated with larger $b$ leads to lower number of cumulative cases and a faster epidemic control due to the stricter lockdown measures imposed.

Similarly, Fig.~\ref{fig:Active_MA_P1} depicts the number of active infected cases for the optimal travel rates obtained from the PGD algorithm in Fig.~\ref{fig:f(tau)_MA_P1}.
Notably, we achieve a reduction rate of $-\alpha = - 0.0231$, which corresponds to halving infected cases every $30$ days, for the optimal travel rates obtained for $b > 20$.

\begin{figure}[t]
    \centering
    \includegraphics[width=0.35\textwidth]{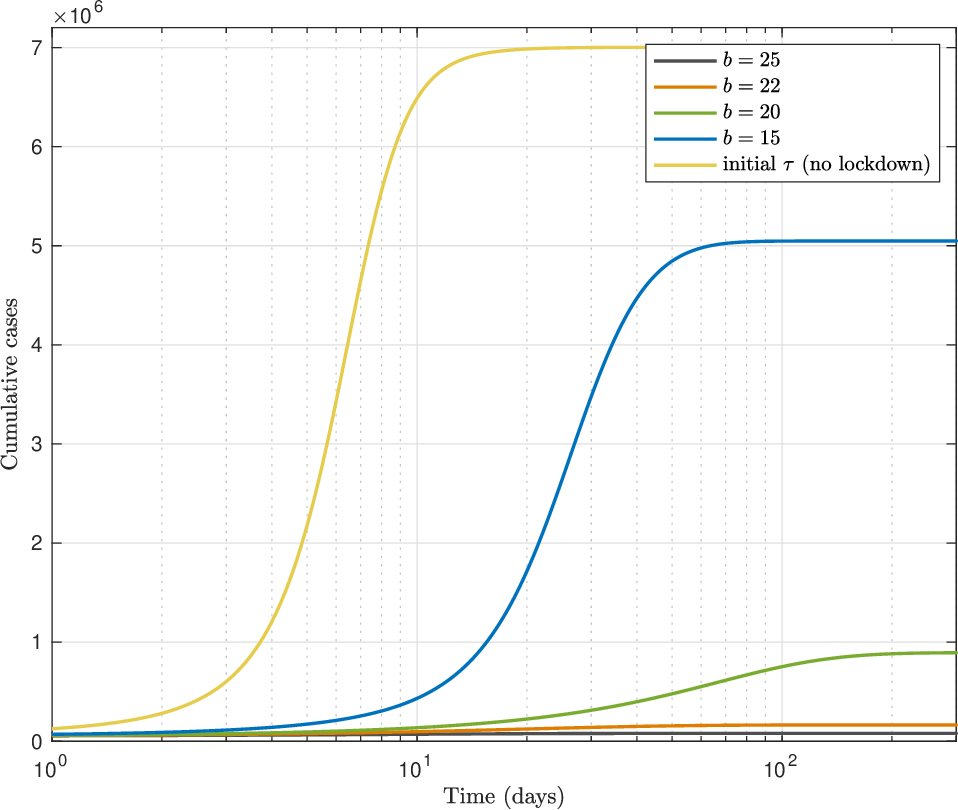}
    \caption{Number of cumulative cases (infected, quarantined, and recovered) for different constraints over the travel rates for the state of Massachusetts.} 
    \label{fig:Cumulative_MA_P1}
\end{figure}

\begin{figure}[t]
    \centering
    \includegraphics[width=0.35\textwidth]{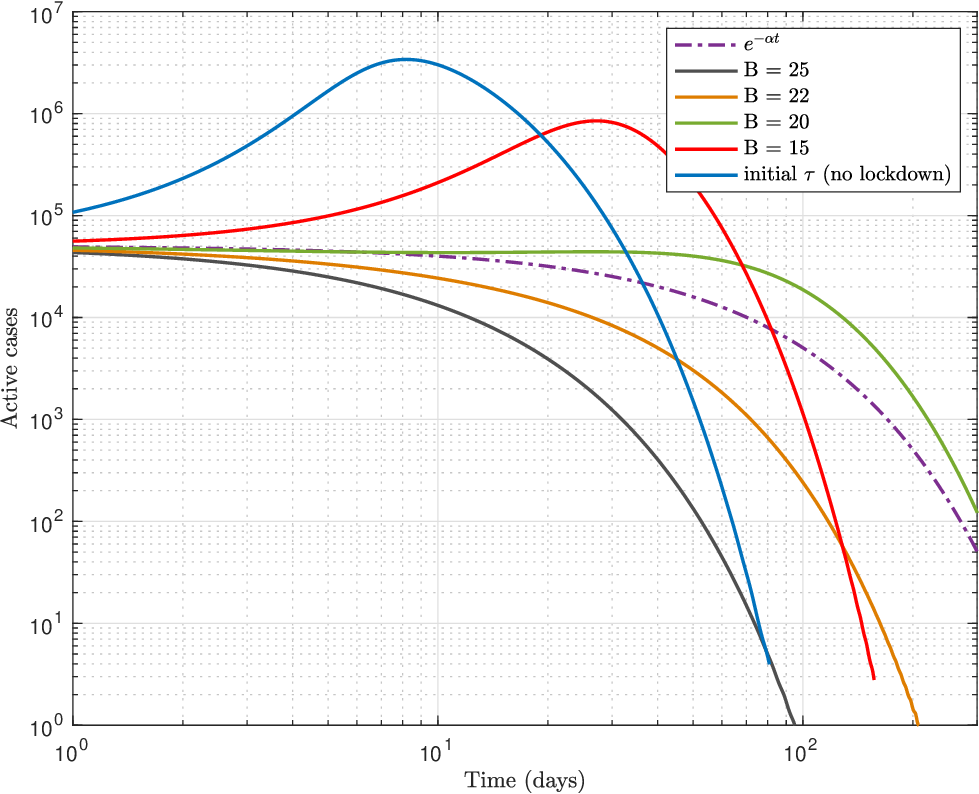}
    \caption{Number of active (asymptomatic and symptomatic infected) cases for different constraints over the travel rates for the state of Massachusetts.} 
    \label{fig:Active_MA_P1}
\end{figure}

\noindent
\textbf{Optimizing Quarantine Rates.} 
Fig.~\ref{fig:cumulative_MA} and Fig.~\ref{fig:active_without_q} compare the optimal quarantining rates calculated by our method with several other quarantining policies. In uniform quarantining, the quarantine rates of all locations are chosen to be the same and such that the total economic cost equals the cost associated with the optimal quarantine policy. In random quarantining however, the rates are randomly chosen from a uniform distribution and the parameters are chosen such that we have the same total economic cost as the optimal quarantine policy. In a uniformly bounded decline policy, quarantine rates are determined to ensure that the infection decay rates of all nodes remain within a specified bound, resulting in the same economic cost as our optimal policy. For the same economic costs, we observe that optimal quarantine rates minimize the number of infected (cumulative and active) cases in the network.

\section{Conclusions}\label{sec:conclusions}

In this paper, we presented a framework for epidemic control with two approaches. 
The first focuses on strategically reducing travel rates to contain the virus effectively, utilizing the maximum eigenvalue function and the PGD algorithm for optimization. 
The second approach enhances the SIR model with a quarantine strategy to minimize costs and decrease new infections rapidly through node-specific rates. 
We propose a solution that simplifies optimal quarantining into a weight-balancing problem and establish a link between optimization constraints and the epidemic's basic reproduction number. 
Finally, applying (Aug-PDGD) to our optimal quarantine problem ensures exponential stability of the solution.

\begin{figure}[t]
    \centering
    \includegraphics [width=0.35\textwidth]{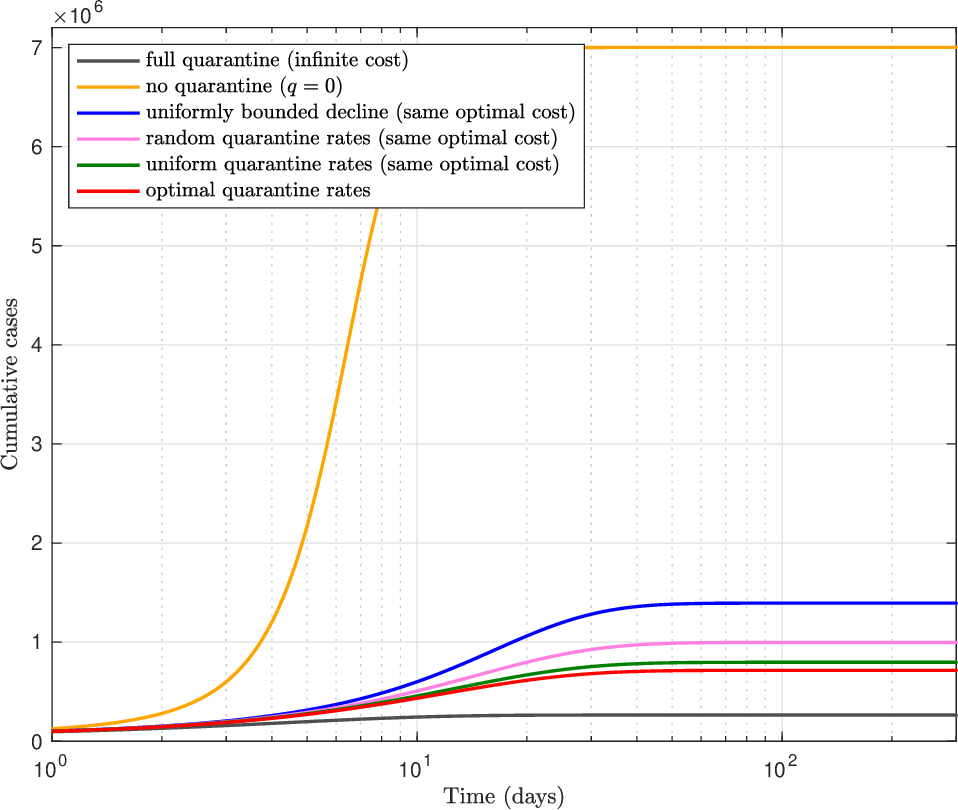}
    \caption{Number of cumulative cases (infected, quarantined, and recovered) assuming for different quarantine rates for the state of Massachusetts. For the optimal policy, $\alpha$ is set to $0.023$, which corresponds to halving the number of infected cases every $30$ days.} 
    \label{fig:cumulative_MA}
\end{figure}
\begin{figure}[t]
    \centering
    \includegraphics[width=0.35\textwidth]{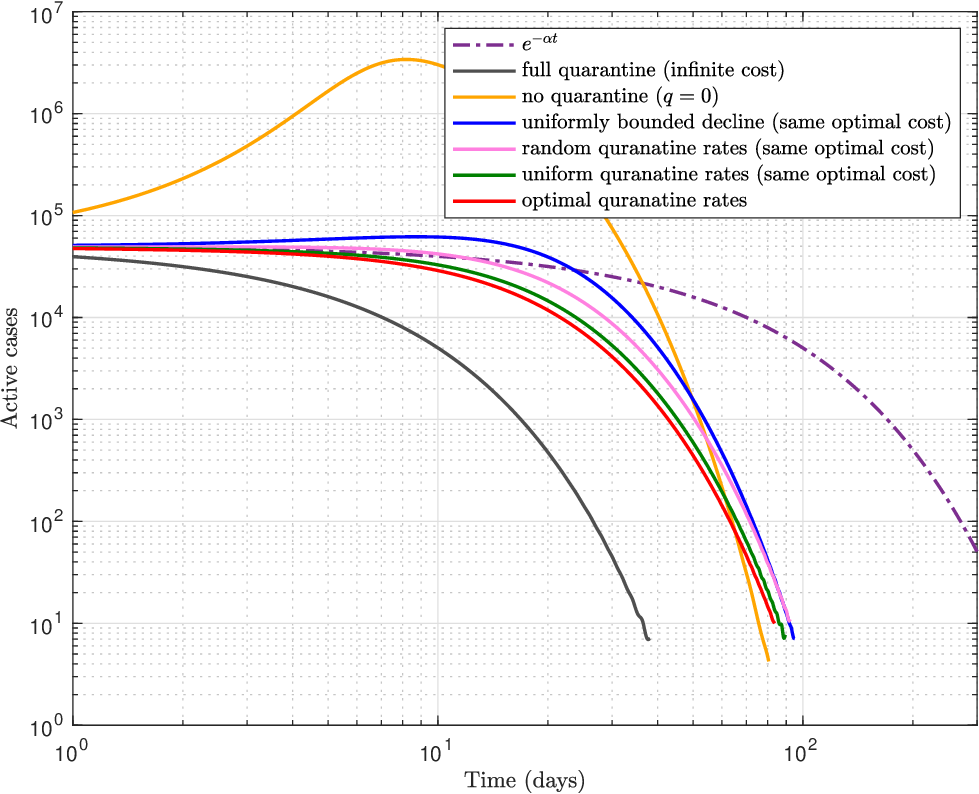}
    \caption{Number of active (asymptomatic and symptomatic infected) cases  for different quarantine policies for the state of Massachusetts. For the optimal policy, $\alpha$ is set to $0.023$, which corresponds to halving the number of infected cases every $30$ days.}
    \label{fig:active_without_q}
\end{figure}

Our numerical results highlight the efficiency of both approaches in controlling the epidemic. Although we constructed the network within the county-level framework, it's noteworthy that, given  data availability, these methods can be applied to much smaller nodes (e.g., zip code or census track levels).




\appendices

\section{Proof of Theorem~\ref{SIQR_unconstrained}}
\label{appendix:A}

Consider the problem formulation in \eqref{problem_formulation2}.
Setting $w = \bone - q$, we equivalently have 
\begin{equation} 
\begin{split} 
    &{\min}_{w} \quad \sum_{i=1}^{n} \frac{z_i^\mathrm{a}}{w_i^{\mathrm{a}}} + \frac{z_i ^s}{w_i^{\mathrm{s}}} \\
    &\text{s.t.} \quad \lambda_{\max} (A_0) \leq 0 \ , \\
    & \qquad \bzero \leq w \leq \bone \ ,
\end{split} 
\label{revised_formulation1}
\end{equation}
where from \eqref{matrix_for_convergence}
\begin{equation*}
A_0 =
\begin{pmatrix}
    \diag{(w^{\mathrm{a}})} + E'  & \beta^{\mathrm{s}} \diag{(s(t_0))} A\\
    \epsilon I & \diag{(w^{\mathrm{s}})} - (r^{\mathrm{s}} + 1 - \alpha) I
\end{pmatrix}, 
\end{equation*}
where $E' = \beta^{\mathrm{a}} \diag{(s(t_0))} A - (\epsilon + r^{\mathrm{a}} + 1 - \alpha) I$. 

We proceed by finding the range of the decay rate $\alpha$ that ensures the existence of a non-negative $w$ which satisfies the maximum eigenvalue constraint in~\eqref{revised_formulation1}. Let us first assume 
\begin{equation}
    \alpha < \min(r ^ s + 1, \ \epsilon + r^a + 1 - \max_i{(\beta^{\mathrm{a}}\diag{(s(t_0))} A)_{ii}}) ,
    \label{alpha_assumption1}
\end{equation}
where $\max_i{(\beta^{\mathrm{a}}\diag{(s(t_0))} A)_{ii}}$ is the maximum diagonal element of $\beta^{\mathrm{a}}\diag{(s(t_0))} A$, which is known and fixed. 
If \eqref{alpha_assumption1} does not hold, then matrix $A_0$ cannot be continuous time stable (i.e., the first constraint in \eqref{revised_formulation1} is not satisfied for any choice of $w$). 
More specifically, if \eqref{alpha_assumption1} is not satisfied for the decay rate $\alpha$, there exists no $d > \bzero$ satisfying $A_0 d \leq \bzero$, regardless of the choice of $w$ (see the first bullet of Lemma~\eqref{splitting_lemma}). 

Let us write
\begin{align} 
    A_0 &= \diag{(w)} +
    \begin{pmatrix}
        E' & \beta^{\mathrm{s}} \diag{(s(t_0))}A \\
        \epsilon I & -(r^{\mathrm{s}} + 1 - \alpha) I
    \end{pmatrix} \notag \\
    & = \diag{(w)} + B_0,
    \label{eq.18}
\end{align}
where $w = (w^{\mathrm{a}}, w^{\mathrm{s}})$ $\in \mathbb{R}^{2n}$, and $E' = \beta^{\mathrm{a}} \diag{(s(t_0))} A - (\epsilon + r^{\mathrm{a}} + 1 - \alpha) I$. 
From \eqref{eq.18}, we set $A_0$ to be the sum of a non-negative diagonal matrix $\diag{(w)}$ with $B_0$. 
By the Perron-Frobenius theorem, since adding $\diag{(w)}$ to $B_0$ cannot decrease the maximum eigenvalue of $B_0$, in order to have a feasible problem in \eqref{revised_formulation1}, it is necessary for $B_0$ to be Hurwitz. 
However, although being necessary, \eqref{alpha_assumption1} is not a sufficient condition for $B_0$ to be Hurwitz. 
In order to make $B_0$ Hurwitz, we implement the following approach. We can write
\begin{align*}
    B_0 = & \begin{pmatrix}
        \beta^{\mathrm{a}} \diag{(s(t_0))} A - (\epsilon + r^{\mathrm{a}} + 1)I  & \beta^{\mathrm{s}} \diag{(s(t_0))}A \\
        \epsilon I & -(r^{\mathrm{s}} + 1) I 
    \end{pmatrix}\nonumber \\ 
    & + \alpha I = C_0 + \alpha I.
\end{align*}
Again, to have $B_0$ Hurwitz, the matrix $C_0$ needs to be Hurwitz and $\alpha$ must be chosen such that adding it to the diagonal elements of $C_0$ does not make $C_0$ unstable. So, to have a feasible problem, $\alpha$ should satisfy
$\alpha < - \lambda_{\max} (C_0) $.

Having $B_0$ Hurwitz, implies that $B_0$ is invertible and $-B_0^{-1} \geq \bzero$ (\cite[ Thm.~$10.3$]{FB-LNS}). 
Using Lemma~\ref{splitting_lemma}, having $A_0$ continuous-time stable is equivalent to 
$
    -B_0^{-1} \diag{(w)}
$
being discrete-time stable. Since changing the order of the product of two matrices doesn't affect the eigenvalues of the product, we equivalently want the matrix
$\diag{(w)}(-B_0^{-1})$ to be discrete-time stable. 
Applying part~$(3)$ of Lemma~\ref{splitting_lemma}, this is equivalent to
$-B_0^{-1} - {\diag{(w)}}^{-1}$
being continuous-time stable. 

For a non-negative $w$, let us set $v_i = {w_i}^{-1}$, then\eqref{revised_formulation1} without considering $w \leq \bone $ is equivalent to 
\begin{equation} 
    \begin{split} 
        &{\min}_{v} \quad z^T \, v \\
        &\text{s. t.} \quad \lambda_{max}(-B_0^{-1} - {\diag{(v)}})\leq 0, \\ 
       &  v \geq \bzero
    \end{split}
    \label{reduced_revised_formulation1}
\end{equation}
where $-B_0^{-1}$ is Metzler. Due to the strong connectivity of A, the two upper blocks in $B_0$ are strongly connected. The two lower diagonal blocks ensure the strong connectivity of $B_0$. Hence, $-B_0$ is irreducible as well. The inverse of an irreducible matrix is also irreducible and we have $-B_0^{-1}$ being irreducible.
The objective of \eqref{reduced_revised_formulation1} is to find the optimal solution $v^*$ such that $z^T v^*$ is minimized and the matrix $-B_0^{-1} - {\diag{(v^*)}}$ is Hurwitz. 
Applying, \cite[Theorem~$3$]{2022:Ron_Bullo_2731_2736}, we have that \eqref{reduced_revised_formulation1} can be reduced to a matrix balancing problem. 
More specifically, consider $d^* \in \mathcal{R}^{2n}_{>0}$ such that the matrix 
$$
(\diag{(d^*)})^{-1} \diag{(z)} (-B_0^{-1}) \diag{(d^*)}
$$ 
is weight balanced.  
From \cite[Theorem~$3$]{2022:Ron_Bullo_2731_2736} we have that  \eqref{reduced_revised_formulation1} has a unique optimal solution 
\begin{equation}\label{optimal_sol_weight_bal}
    v^* = (\diag{(d^*)})^{-1} (-B_0^{-1}) \diag{(d^*)} \mathbf{1} \ .  
\end{equation} 
So, from the above relation we have shown that $\eqref{revised_formulation1}$ is equivalent to a matrix balancing problem for $w \geq \bzero$, or equivalently $q\leq \bone$.

To complete our analysis, we also require $w \leq \bone$ (or equivalently $q \geq \bzero$). 
Let us see the solution $v^*$ in more details. As discussed previously,  $B_0$ is a Hurwitz matrix with non-negative off-diagonal elements and negative diagonal elements. Suppose $m$ is the absolute value of the most negative diagonal element of $B_0$, or 
\begin{equation*}\label{m}
m = \max_i |(B_0)_{ii}| > 0 \ .
\end{equation*}
Let us write
    $B_0 = B - mI$, 
where $B$ is a non-negative matrix. From the Perron Frobenius theorem we have $\rho (B) >0$ and $\rho(B) - m$ is an eigenvalue of the Hurwitz matrix $B_0$. 
\begin{equation} \label{B_0_inverse}
\begin{aligned}
    B_0^{-1} &= (B-m I)^{-1} = \frac{1}{m}(\frac{1}{m}B - I)^{-1} \\
    &= - \frac{1}{m}\sum_{n=0}^{\infty} \left(\frac{1}{m} B\right)^n = - \frac{1}{m}\sum_{n=0}^{\infty} \left(I + \frac{1}{m} B_0\right)^n.
\end{aligned}
\end{equation}
By substituting~\eqref{B_0_inverse} in~\eqref{optimal_sol_weight_bal}, $v^*$ can be derived as
\begin{equation}
    v^* = \frac{1}{m}(\diag{(d^*)})^{-1} \sum_{n=0}^{\infty} \left(I + \frac{1}{m} B_0\right)^n\diag{(d^*)} \mathbf{1} \ .  
\end{equation} 
Thus, to have $w \leq \bone$, we require $v^* \geq \bone$ or equivalently, the sum of row elements of $$ (\diag{(d^*)})^{-1} \sum_{n=0}^{\infty} \left(I + \frac{1}{m} B_0\right)^n\diag{(d^*)}
$$ should be greater or equal to $m$. Indeed, defining $\sum_{n=0}^{\infty} \left(I + \frac{1}{m} B_0\right)^n = Q$, we need 
\begin{equation}\label{solution_inequality}
    \sum_{j=1}^{2n} \frac{d^*_j}{d^*_i}Q_{ij} \geq m \qquad \forall i \in \{1,2,\dots, 2n\} \ .
\end{equation}

Notice that $\frac{1}{m}B_0$ is a matrix with non-negative off-diagonal elements and negative diagonal elements of $-1 \leq \frac{(B_0)_{ii}}{m} < 0$. Thus, $(I + \frac{1}{m} B_0)$ is a non-negative matrix with diagonal elements, and 
$$0 \leq (I +\frac{1}{m}B_0)_{ii} < 1 \ .$$
Hence, 
\begin{equation}
    Q \geq I + [I + \frac{1}{m}B_0] + [(I + \frac{1}{m}B_0)^2] = \Bar{Q}
    \label{eq.34}
\end{equation}
Letting $(I + \frac{1}{m} B_0)_{ii} = 0$, we have 
\begin{align}
        (I + \frac{1}{m} B_0) \geq
        \begin{pmatrix}
            \frac{1}{m}\Tilde{A} & \frac{\beta^{\mathrm{s}}}{m} \diag{(s(t_0))} A \\
            \frac{\epsilon}{m} I & 
            \bzero
        \end{pmatrix} \ ,
        \label{eq.32}
\end{align}
where 
$
\Tilde{A} = \beta^{\mathrm{a}} \diag{(s(t_0))} A
$ that its diagonal elements are replaced with zero. Substituting \eqref{eq.32} into \eqref{eq.34},
we derive a lower bound of $\Bar{Q}$ as $\Tilde{Q}$:
\begin{align}
    \Tilde{Q}= & \begin{pmatrix}
        \Tilde{Q}_{11} & \Tilde{Q}_{12} \\
        \frac{\epsilon}{m} I  +  \frac{\epsilon}{m^2} \Tilde{A} & 
        I + \frac{ \epsilon \beta^{\mathrm{s}}}{m^2} \diag{(s(t_0))} A
    \end{pmatrix},
\end{align}
where $$
\Tilde{Q}_{11} =     
        I + \frac{\Tilde{A}}{m}  + \frac{1}{m^2}(\Tilde{A}^2 + \epsilon \beta^{\mathrm{s}} \diag{(s(t_0))} A),
$$
$$
\Tilde{Q}_{12}=
        \frac{\beta^{\mathrm{s}}}{m} \diag{(s(t_0))} A  + \frac{\beta^{\mathrm{s}} }{m^2}\Tilde{A} \diag{(s(t_0))} A \ .
$$ 

From the definition of $B_0$ in~\eqref{eq.18}, $m$ can be written as
\begin{equation*}
\begin{aligned}
    m = \max\Big(&r^s + 1 - \alpha, \\
    & \epsilon + r^a + 1 - \alpha - \min\left(\beta^{\mathrm{a}} \operatorname{diag}(s(t_0)) A \right)_{ii}\Big) \ .
\end{aligned}
\end{equation*}

Substituting Assumption~\ref{assump1} into the definition of $\Tilde{Q}$, we have 
\begin{equation*}
    \begin{aligned}
       \Tilde{Q}_{ii} &= 1 + \frac{1}{m^2}((\Tilde{A}^2)_{jj} + \epsilon \beta^{\mathrm{s}} s_i(t_0) A_{jj}) \\
       &\geq 1 + \frac{\epsilon}{m^2} \beta^{\mathrm{s}} s_i(t_0) A_{jj} \geq 1 + \frac{x}{m^2}\\
       &\geq m  \ , \quad \, \quad i \in \{1,\dots,n\} \ ,
    \end{aligned}
\end{equation*}  
and
\begin{equation*}
    \begin{aligned}
       \Tilde{Q}_{ii} &= 1 + \frac{1}{m^2} \epsilon \beta^{\mathrm{s}} s_j(t_0) A_{jj}\geq 1 + \frac{x}{m^2} \geq m \\
       &\geq m \ , \quad \, \quad i \in \{n+1,\dots,2n\} \ .  
    \end{aligned}
\end{equation*} 
Finally,
\begin{equation}
    \sum_{j=1}^{2n} \frac{d^*_j}{d^*_i}Q_{ij} \geq Q_{ii}
    \geq \Bar{Q}_{ii} \geq
    \Tilde{Q}_{ii} \geq m \ ,
\end{equation}
for all $i \in \{1,2,\dots, 2n\}$. Hence, $v^* \geq \bone$, and equivalently $q^* \geq \bzero$. Subsequently, substituting the definitions of $v$ and $w$,
the optimal quarantined rates are calculated as
\begin{equation*}
    q^* = \mathbf{1} + 1./\big({(\diag{(d^*)})}^{-1} B_0^{-1} (\diag{(d^*)}) \mathbf{1}\big) \ . 
\end{equation*}

\section{Proof of Theorem~\ref{reprod_theor_proof}}
\label{appendix:B}

We will establish that initiating from $R_0 < 1$ leads us to $\lambda_{\max} (M(t_0)) \leq 0$. Assume 
\begin{equation*}
   R_0 \leq 1 \ . 
   \label{condition1}
\end{equation*}
We have
\begin{equation*}
M(t_0) = 
    \begin{pmatrix}
    E & \beta^{\mathrm{s}} \diag{(s(t_0))} A  \\
    \epsilon I & - (r^{\mathrm{s}} + \diag{(q^{\mathrm{s}})}) I \\
    \end{pmatrix}, 
\end{equation*}
where $E = \beta^{\mathrm{a}} \diag{(s(t_0))} A - (\epsilon + r^{\mathrm{a}} + \diag{(q^{\mathrm{a}}})) I$. 
Let us write $M(t_0)$ as 
\begin{equation*}
    M(t_0) = F + V,
\end{equation*}
where 
\[
F = \begin{pmatrix}
    \beta^{\mathrm{a}} \diag{(s(t_0))} A  & \beta^{\mathrm{s}} \diag{(s(t_0))} A  \\
    \bzero  & \bzero \\
    \end{pmatrix},
\]
and 
\[
V = \begin{pmatrix}
    -(\epsilon + r^{\mathrm{a}})I - \diag{(q^{\mathrm{a}})}) & \bzero  \\
    \epsilon I & - (r^{\mathrm{s}} + \diag{(q^{\mathrm{s}})}) I \\
    \end{pmatrix}.
\]
Note that $F$ is nonnegative and $V$ is Metzler and Hurwitz. Using the definition of $R_0$ in \cite[Definition~$1$]{smith2022convex} we have
\begin{equation}\label{eq.35}
    R_0 = \rho(F V ^{-1}) =
    \rho( -F V ^{-1}) \leq 1.
\end{equation}
Utilizing the early results of Thm.~\ref{SIQR_unconstrained}, \eqref{eq.35} is equivalent with
\begin{equation*}
    \lambda_{\max} (F + V) \leq 0,
\end{equation*}
(i.e., $F + V$ is Hurwitz).
Hence, we have
\begin{equation*}
    \lambda_{\max} (M(t_0)) = \lambda_{\max} (F + V) \leq 0.
\end{equation*}
Therefore, the set $\lambda_{\max} (M(t_0)) \leq 0$ is equivalent with the set $R_0 \leq 1$.

\section{} \label{appendix:C}
Given the left and right eigenvectors of \( \lambda_{\max} (M(t_0,q)) \) as \( v_{\max}^T \) and \( u_{\max} \), respectively, we argue that 
\[ \left\| \frac{v_{\max} \circ u_{\max}}{v_{\max}^T u_{\max}} \right\| \]
is bounded for any choice of \( M(t_0, q) \). Without loss of generality, we may assume the left and right eigenvectors are normalized to have unit norm. In that case, the norm of the numerator is automatically bounded.

Applying Lemma~\ref{perron_frob_3} to \( M(t_0,q) \) and \( M(t_0,q)^T \), we have that \( u_{\max} \) and \( v_{\max} \) are positive as are the respective dominant eigenvectors. Hence, the dot product of the corresponding Perron-Frobenius left and right eigenvectors is always positive. This also implies that the denominator is bounded away from zero, for otherwise by compactness we could find a matrix $M$ such that $v_{\max}^T u_{\max} = 0$, contradicting positivity of these vectors. Therefore, 
$\|\nabla g_1(q)\|$ is bounded.


\section{Proof of Theorem~\ref{f_lipschitz}} \label{appendix:D}
It is obvious that the gradient of $f(q)$ is Lipschitz as long as every $q_i(0)$ is bounded away from $1$. We thus need to argue that the primal-dual dynamics keep all $q_i(0)$ uniformly bounded away from one. 

We start by expressing the primal dynamics as given by~\eqref{q_dot}:
\begin{equation}\label{q_dot2}
    \dot{q}(t) = -\nabla f(q(t)) - \sum_{i=1}^{4n+1} [\rho g_i(q(t)) + \lambda_i(t)]_{+} \nabla g_i(q(t)).
\end{equation}
Here, \(\lambda(t) \in \mathbb{R}_{\geq 0}^{(4n+1)}\) is a vector that remains in a compact region of space for all \(t \geq 0\) by Proposition~\ref{trajectory_set}. Therefore, using~\eqref{lipschitz_constraints}, we can state that the second part of~\eqref{q_dot2} is bounded.

Note that as \( q_i \rightarrow 1 \), we have  \(|\nabla_{i} f(q(0))|\) approaches \(\infty\), and \(\dot{q}_i < 0\). This immediately implies there is some \( q' < 1 \) such that  the region \([0, q']^n\) remains invariant under the differential equation~\eqref{q_dot2}. Remaining within this region implies the Lipschitz continuity of \(\nabla f(q)\).

\section{Proof of Proposition~\ref{regularity_prop}} \label{appendix:E}

Suppose \( q^* \) is a local minimum of~\eqref{problem_formulation2}. Therefore, \( q^*_i \neq 1 \) for all \( i \in \{1, 2, \dots, 2n\} \). This implies the active set is
\[
\mathcal{I} = \{ j : g_j(q^*) = 0, \ j \in J \} \ ,
\]
where \( J = \{1, 2, \dots, 2n+1\} \).

First, assume \(\mathcal{I} \neq J\). From the definition provided for \( g_1 \), \(\nabla g_1 = \frac{v_{\max} \circ u_{\max}}{v_{\max}^T u_{\max}} \) is an element-wise positive vector by the same analysis provided in Appendix~\ref{appendix:D}. It is therefore evident that any \( 2n \) vectors of \( g_j \), for all \( j \in J \), would have linearly independent gradients at any point \( q^* \).

Now, consider \(\mathcal{I} = J\). This implies \( q^* = \mathbf{0} \). However, for our model of the outbreak in \eqref{problem_formulation2}, setting all quarantine rates equal to zero implies having no interventions. Hence, the constraint function \( g_1 = \lambda_{\max}(M(t_0, q = \mathbf{0})) + \alpha \) would not be active. Hence, \(\mathcal{I}\) cannot include all \( j \in J \), and this proves the non-linearity of the active set at any local minimum \( q^* \).

\bibliographystyle{IEEEtran}
\bibliography{autosam}

\begin{IEEEbiography}[{\includegraphics[width=1in,height=1.25in,clip,keepaspectratio]{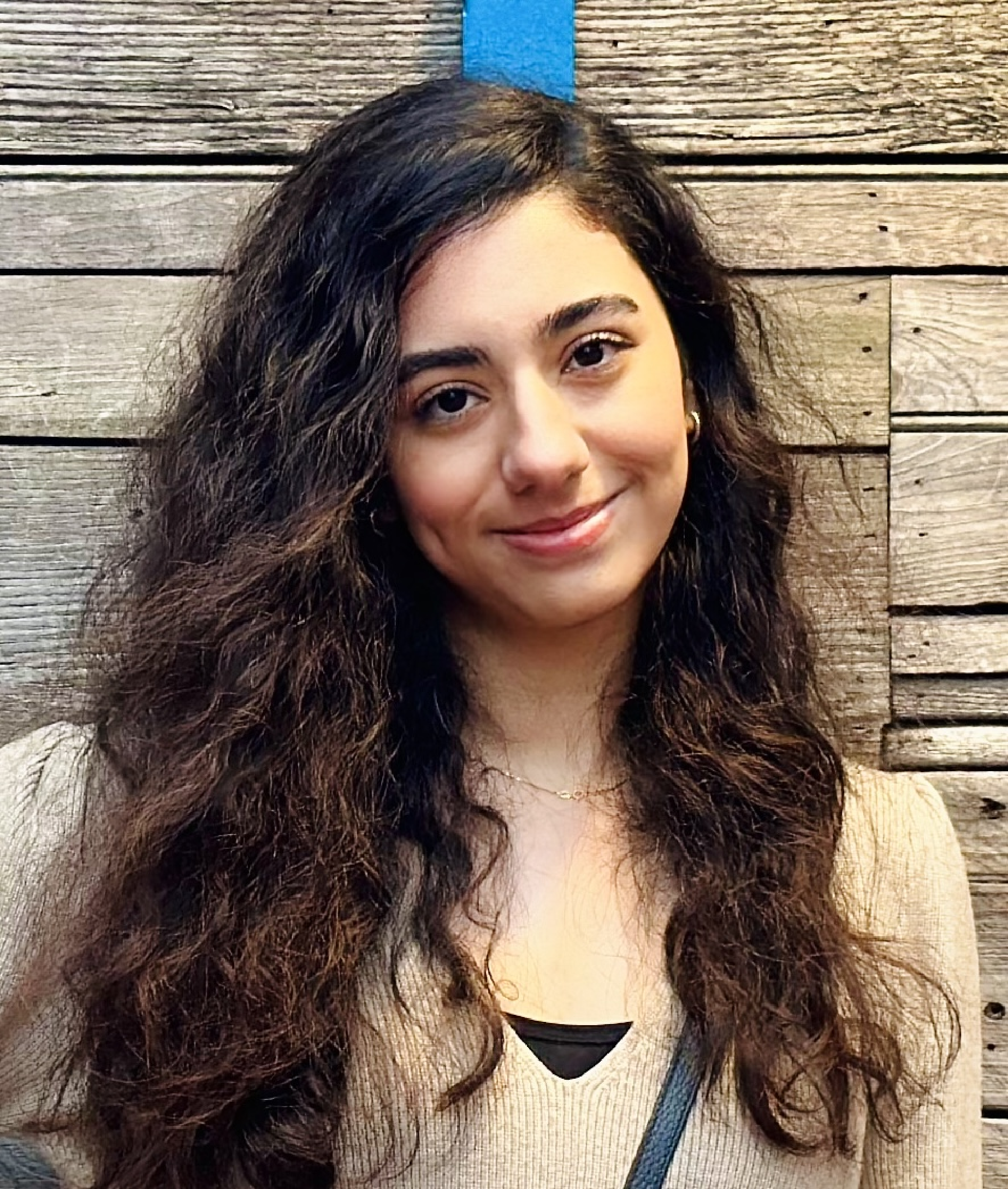}}]{Mahtab Talaei}
is currently pursuing her Ph.D. in the Division of Systems Engineering at Boston University in Boston, MA. She obtained her B.Sc. and M.Sc. degrees in Electrical Engineering from Isfahan University of Technology (IUT) in Isfahan, Iran, in 2019 and 2022, respectively. Her research primarily involves utilizing optimization and machine learning techniques to create predictive models for healthcare applications.
\end{IEEEbiography}

\begin{IEEEbiography}[{\includegraphics[width=1in,height=1.25in,clip,keepaspectratio]{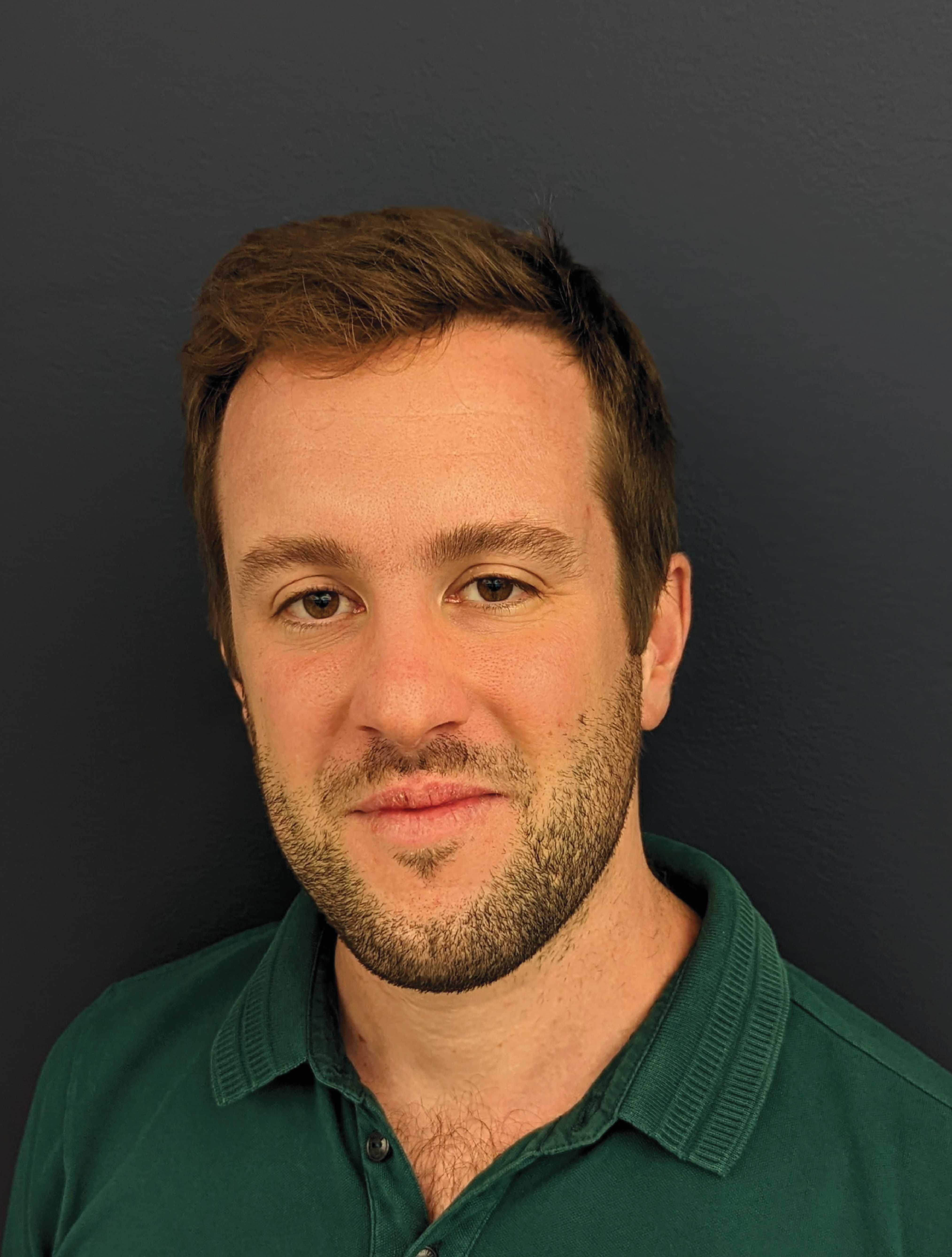}}]{Apostolos I. Rikos} (M'16) is an Assistant Professor at the Artificial Intelligence Thrust of the Information Hub, The Hong Kong University of Science and Technology (Guangzhou), Guangzhou, China. 
He is also affiliated with the Department of Computer Science and Engineering, The Hong Kong University of Science and Technology, Clear Water Bay, Hong Kong, China.
He received his B.Sc., M.Sc., and Ph.D. degrees in Electrical Engineering from the Department of Electrical and Computer Engineering, University of Cyprus in 2010, 2012, and 2018, respectively.
In 2018, he joined the KIOS Research and Innovation Center of Excellence in Cyprus, where he was a Research Lecturer. 
He joined the Division of Decision and Control Systems at KTH Royal Institute of Technology as a Postdoctoral Researcher in 2020 and the Department of Electrical and Computer Engineering, Division of Systems Engineering, at Boston University as a Postdoctoral Associate in 2023.
His research interests are in the areas of distributed optimization and learning, distributed network control and coordination, privacy and security, and algorithmic design.
\end{IEEEbiography}

\vskip -2\baselineskip plus -1fil
\begin{IEEEbiography}[{\includegraphics[width=1in,height=1.25in,clip,keepaspectratio]{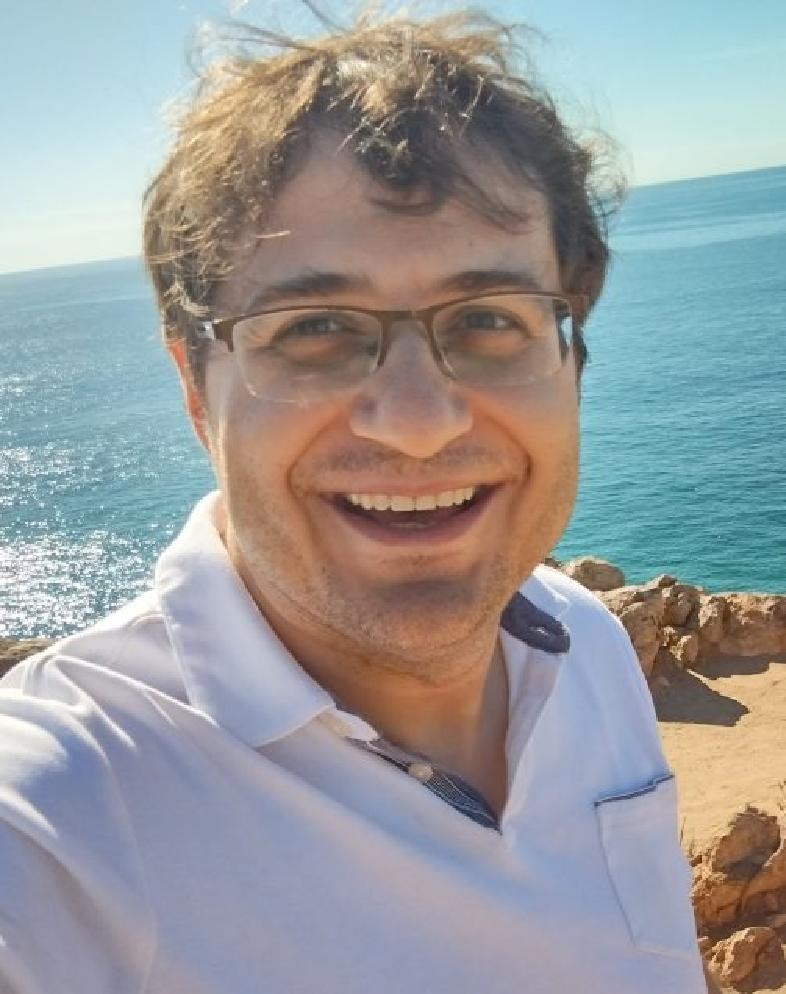}}]{Alex Olshevsky}
(Member, IEEE) received the B.S. degrees in Applied Mathematics and Electrical Engineering from the Georgia Institute of Technology, Atlanta, GA, USA, both in 2004, and the M.S. and Ph.D. degrees in electrical engineering and computer science from the Massachusetts Institute of Technology, Cambridge, MA, USA, in 2006 and 2010, respectively.
He is currently an Associate Professor with the Department of Electrical and Computer Engineering, Boston University, Boston, MA, USA.
His research interests include control systems, optimization, and net- work science.
He was the recipient of the National Science Foundation CAREER Award, the Air Force Young Investigator Award, the ICS Prize from INFORMS for best paper on the interface of operations research and computer science, and the SIAM Paper Prize for annual paper from the SIAM Journal on Control and Optimization chosen to be reprinted in SIAM Review.
\end{IEEEbiography}

\vskip -2\baselineskip plus -1fil
\begin{IEEEbiography}[{\includegraphics[width=1in,height=1.25in,clip,keepaspectratio]{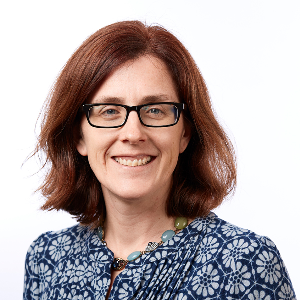}}]{Laura F. White} is a Professor of Biostatistics and Associate Director of the Population Health Data Science Program at the Boston University School of Public Health. She received her BS in Mathematics and Statistics from Utah State University and her PhD in Biostatistics from Harvard University. She co-directs the graduate program in Biostatistics at BU and is an co-director of the Data Science and Surveillance Core of the Center on Emerging Infectious Diseases. Her research interests are in developing novel statistical approaches to understanding infectious disease transmission dynamics and disease burden. Her work is supported by the NIH and CDC.  
\end{IEEEbiography}
\vskip -2\baselineskip plus -1fil
\begin{IEEEbiography}[{\includegraphics[width=1in,height=1.25in,clip,keepaspectratio]{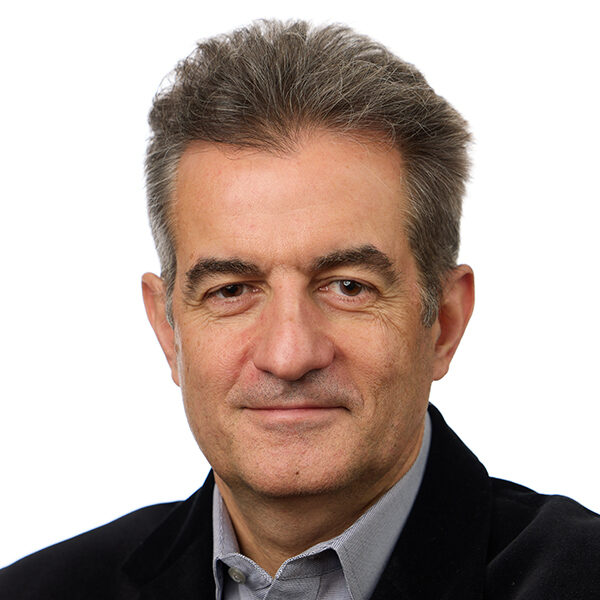}}]{Ioannis Ch. Paschalidis}
(Fellow, IEEE) received a diploma in Electrical and Computer Engineering (ECE) from the National Technical University of Athens, Greece, in 1991, and the MS and PhD degrees in Electrical Engineering and Computer Science (EECS), from the Massachusetts Institute of Technology (MIT), Cambridge, Massachusetts, in 1993 and 1996, respectively. In September 1996, he joined Boston University where he has been ever since. He is currently a Distinguished Professor of Engineering and the director of the Hariri Institute for Computing. 
His current research interests lie in the fields of optimization, control, stochastic systems, robust learning, and computational medicine/biology. His has been recognized with am NSF CAREER award, several best paper awards, and an IBM/IEEE Smarter Planet Challenge Award. He was an invited participant at the 2002 Frontiers of Engineering Symposium organized by the National Academy of Engineering, at the 2014 National Academies Keck Futures Initiative Conference, and at a 2024 National Academies Symposium on Alzheimer’s disease. He is a Fellow of IEEE, IFAC, and the Asia-Pacific AI Association. From 2013 to 2019 he was the founding Editor-in-Chief of the IEEE Transactions on Control of Network Systems and he is the General Co-Chair of the 2025 IEEE Conference on Decision and Control.
\end{IEEEbiography}

\end{document}